\newcommand{\SUBSTR}[2]{\ensuremath{T\lbrack #1.. #2\rbrack}}
\newcommand{\CARD}[1]{\ensuremath{\vert #1\vert}}
\newcommand{\SA}[1]{\ensuremath{\textsf{SA}\ifthenelse{\equal{#1}{}}{}{\lbrack#1\rbrack}}}
\newcommand{\UnaryOperator}[2][]{%
  \ifx&#1&%
  \ensuremath{\mathop{}\mathopen{}#2\mathopen{}}%
  \else%
  \ensuremath{\mathop{}\mathopen{}#2\mathopen{}\left(#1\right)}%
  \fi%
}
\newcommand{\Oh}[1]{\UnaryOperator[#1]{\mathcal{O}}}
\newcommand{\menge}[1]{\ensuremath{\left\{#1\right\}}} %
\newcommand{\gauss}[1]{\left\lfloor#1\right\rfloor} %
\newcommand{\abs}[1]{\left|#1\right|} %
\newcommand{\smallOh}[1]{\UnaryOperator[#1]{o}}
\title{On the Benefit of Merging Suffix Array Intervals for Parallel Pattern Matching}
\author[1]{Johannes Fischer}
\author[1]{Dominik K{\"o}ppl}
\author[1]{Florian Kurpicz} 
\affil[1]{%
Dept.\ of Computer Science, Technische Universit{\"a}t Dortmund, Germany\\
johannes.fischer@cs.tu-dortmund.de, dominik.koeppl@tu-dortmund.de, florian.kurpicz@tu-dortmund.de}
\authorrunning{J. Fischer, D. K{\"o}ppl, and F. Kurpicz} %
\subjclass{I.1.2 Algorithms}%
\keywords{parallel algorithms, pattern matching, approximate string matching}%
\begin{document}

\maketitle 

\newcommand{\occ}{\text{occ}}

\begin{abstract}
We present parallel algorithms for exact and approximate pattern matching with suffix arrays, using a CREW-PRAM with $p$ processors.
Given a static text of length $n$, we first show how to compute the suffix array interval of a given pattern of length $m$ 
in $\Oh{\frac{m}{p}+ \lg p + \lg\lg p\cdot\lg\lg n}$ time for $p \le m$.
For approximate pattern matching with $k$ differences or mismatches, we show how to compute all occurrences of a given pattern 
in $\Oh{\frac{m^k\sigma^k}{p}\max\left(k,\lg\lg n\right)\!+\!(1+\frac{m}{p}) \lg p\cdot \lg\lg n + \occ}$ time, where $\sigma$ is the size of the alphabet and $p \le \sigma^k m^k$.
The workhorse of our algorithms is a data structure for merging suffix array intervals quickly:
Given the suffix array intervals for two patterns $P$ and $P'$, we present a data structure for computing the interval of $PP'$ in $\Oh{\lg\lg n}$ sequential time, or in $\Oh{1+\lg_p\lg n}$ parallel time.
All our data structures are of size $\Oh{n}$ bits (in addition to the suffix array).
\end{abstract}

\newcommand{\Head}  [1]{\UnaryOperator[#1]{\Gamma}}
\newcommand{\RLeaf} [1]{\UnaryOperator[#1]{\ensuremath{H_{\textup{R}}}}}
\newcommand{\LLeaf} [1]{\UnaryOperator[#1]{\ensuremath{H_{\textup{L}}}}}
\newcommand{\lcp}   [1][]{\UnaryOperator[#1]{\mathsf{lcp}}}
\newcommand{\I}   [1]{\ensuremath{\left\lbrack#1\right\rbrack}}
\newcommand{\Key}   [1]{\UnaryOperator[#1]{\mathsf{key}}}
\newcommand{\Value} [1]{\UnaryOperator[#1]{\mathsf{val}}}
\newcommand{\lab}   [1][]{\UnaryOperator[#1]{\mathsf{label}}}
\newcommand{\lca}   [1][]{\UnaryOperator[#1]{\mathsf{lca}}}
\newcommand{\timeSA}{\ensuremath{t_{\SA{}}}}
\newcommand{\spaceSA}{\ensuremath{\abs{\textup{CSA}}}}
\newcommand{\child}   [1][]{\UnaryOperator[#1]{\mathsf{child}}}
\newcommand{\pathlabel} [1][]{\UnaryOperator[#1]{\mathsf{pathlabel}}}
\newcommand{\leafrank} [1][]{\UnaryOperator[#1]{\mathsf{leafrank}}}
\newcommand{\Int}[1]{\UnaryOperator[#1]{\mathcal{I}}}
\newcommand{\IntJ}{\ensuremath{\mathcal{J}}}
\newcommand{\B}[1]{\ensuremath{\mathsf{b}(#1)}}
\newcommand{\E}[1]  {\ensuremath{\mathsf{e}(#1)}}

\section{Introduction}
We consider parallelizing indexed pattern matching queries in static texts, using (compressed) suffix arrays \cite{manber93suffix,navarro07compressed} and (compressed) suffix trees \cite{ohlebusch10cst++,sadakane07compressed} as underlying indexes.
We work with the \emph{concurrent read exclusive write} (CREW) \emph{parallel random access machine} (PRAM) with $p$ processors, as this model most accurately reflects the design of existing multi-core CPUs.
Our starting point is that a (possibly very long) pattern can be split up into several subpatterns that can be matched in parallel.
In a suffix array, this will result in $p$ intervals, each corresponding to one of the subpatterns.
These intervals, called \emph{subintervals}, will then be combined (using a merge tree approach) to finally yield the interval for the entire pattern.
From this interval, all occurrences of the pattern in the text could then be easily listed.

We also consider parallel indexed pattern matching with $k$ errors, again using the same indexes as in the exact case.
Here, we follow the approach of Huynh et al. \cite{Huynh2006}, whose basic idea is to first make all possible modifications of the pattern within distance $k$, and then match those modifications in the suffix array.
To avoid repeated computations of subintervals, a preprocessing is performed for every prefix and suffix of the pattern.
We show how to parallelize both steps (preprocessing and the actual matching), resulting in a fast parallel matching algorithm.
We stress that in the case of approximate pattern matching, parallel pattern matching algorithms are of even more practical importance than in the exact case, as this is an inherently time-consuming task in the sequential case, even for short patterns.

\subsection{Our Results}
In the abstract, we stated the results for uncompressed suffix arrays \cite{manber93suffix} as the underlying index, which requires $\Oh{n\lg n}$ bits of space for a text of length $n$.
However, there exists a wealth of compressed versions of suffix arrays (CSAs) \cite{navarro07compressed}, which are smaller (using $\spaceSA$ bits), but often have nonconstant access time $\timeSA$. (See also Figure~\ref{figSA} for known trade-offs.)
Here, we state our results more generally, using the parameters $\spaceSA$ and $\timeSA$.

Our first result (Thm.~\ref{the:parallel_exact}) is an index of size $\spaceSA + \Oh{n}$ bits that, with $p \le m$ processors, allows us to compute the suffix array interval of a pattern of length $m$ in $\Oh{\timeSA\left(\frac{m}{p} + \lg p + \lg\lg p\cdot\lg\lg n\right)}$ time 
and $\Oh{\timeSA\left( m+ \min\left(p,\lg n\right)\left(\lg p + \lg\lg p\cdot\lg\lg n\right)\right)}$ work.
Our second result (Thm.~\ref{the:k_error_approximate_string_matching}) is an index of the same size $\spaceSA + \Oh{n}$ bits that can find all $\occ$ occurrences of a pattern 
in $\Oh{\timeSA \left( \frac{m^k \sigma^k}{p} \max\left(k,\lg\lg n\right) + (1+\frac{m}{p}) \lg p \cdot \lg \lg n\right) + \occ}$ time, for $p \le m^k \sigma^k$.
Both results rely on the ability to merge two suffix array intervals quickly, a task for which we give a data structure of size \Oh{n} bits on top of CSA that allows us to do the merging in \Oh{\timeSA \lg \lg n} sequential (Lemma~\ref{lem:suffix_interval_merging_time}) or in \Oh{\timeSA (1+\lg_p \lg n)} parallel time (Lemma \ref{lem:parallel_merging}).

\subsection{Related Work}
We are only aware of one article addressing the parallelization of single queries \cite{Jekovec2015}. Their main result is to augment a suffix tree with a data structure of size $\Oh{n\lg p}$ words that answers pattern matching queries using $\Oh{\frac{m}{p}\lg p}$ time and $\Oh{m\lg p}$ work, which is worse than ours in all three dimensions. Parallelizing approximate pattern matching has not been done earlier, to the best of our knowledge.
Another natural approach for exploiting parallelism would be distributing the patterns to be matched onto the different processors and answer them in parallel; this is more of a load balancing problem and cannot be compared with our approach.
Parallel construction of text indices is another road of research \cite{DBLP:conf/icalp/FarachM96,DBLP:conf/cpm/KarkkainenKP15a}, and could easily be combined with our approach.
Finally, in the early 1990's, some work has been done on parallelizing online pattern matching algorithms \cite{DBLP:journals/siamcomp/BreslauerG90,DBLP:journals/siamcomp/BreslauerG92}.

\section{Preliminaries}
\label{sec:preliminaries}

Let $T=t_1\dots t_{n}$ be a \emph{text} of length $n$ consisting of characters contained in an integer \emph{alphabet} $\Sigma$ of size $\sigma=\abs{\Sigma} = n^{\Oh{1}}$. 
$\SUBSTR{i}{j}$ represents the \emph{substring} $t_i\dots t_j$ for $1\leq i\leq j\leq n$. 
We call $\SUBSTR{i}{n}$ the $i$-th \emph{suffix} of $T$ and $\SUBSTR{1}{i}$ the $i$-th \emph{prefix} of $T$. 
We denote the length of the \emph{longest common prefix} of the $i$-th and $j$-th suffix, i.e., $\SUBSTR{i}{n}$ and $\SUBSTR{j}{n}$, by \lcp[i,j].
The \emph{suffix array} of a text $T$ of length $n$ is a permutation of $\menge{1,\dots,n}$ such that 
$\SUBSTR{\SA{i}}{n}$ is lexicographically smaller than $\SUBSTR{\SA{i+1}}{n}$ for all $i=1,\dots,n-1$. 
We denote the inverse of \SA{} with $\SA{}^{-1}$.

An \emph{interval} $\Int{}\ = [b..e]$ is the set of consecutive integers from $b$ to $e$, for $b\le e$. 
For an interval $\Int{}$, we use the notations $\B{\Int{}}$ and $\E{\Int{}}$ to denote the beginning and end of $\Int{}$;
i.e., $\Int{}\ = [\B{\Int{}}..\E{\Int{}}]$.
We write $\abs{\Int{}}$ to denote the length of $\Int{}$; i.e., $\abs{\Int{}}=\E{\Int{}}-\B{\Int{}}+1$.

For a pattern $\alpha \in \Sigma^*$, let $\Int{\alpha}$ be the interval with $\SUBSTR{\SA{i}}{\SA{i}+\CARD{\alpha}-1}=\alpha\iff i\in \Int{\alpha}$.
If we consider two intervals $\Int{\alpha}$ and $\Int{\beta}$ and the corresponding merged interval $\Int{\alpha\beta}$, we call $\Int{\alpha}$ the \emph{left side} interval, $\Int{\beta}$ the \emph{right side} interval.
Let $\Psi^k\lbrack i\rbrack=\SA{}^{-1}\I{\SA{i}+k}$ be the position of the suffix~$\SUBSTR{\SA{i}+k}{n}$ in the suffix array.

\subsection{Suffix Trees}
The \emph{suffix tree} of a text~$T$ is the tree obtained by compacting the trie of all suffixes of~$T$;
it has $n$ leaves and less than $n$~internal nodes, where $n$ is the length of~$T$.
Each edge is labeled with a string.
We enumerate the leaves from left to right such that the $i$-th leaf has $i-1$ lexicographically preceding suffixes;
we write $\leafrank[\ell] = i$ if the leaf~$\ell$ is the $i$-th leaf.
We extend the notion of intervals to nodes; i.e., $\Int{v}$ denotes the interval $[b..e]$ such that $\SA{b},\dots,\SA{e}$ are exactly the suffixes below node $v$.

Since we target small space bounds, our focus is on a compressed representation of suffix trees \cite{sadakane07compressed,ohlebusch10cst++,fischer11combined,fischer09faster}.
The main ingredient of the so-called compressed suffix tree is a \emph{compressed suffix array} \cite{navarro07compressed}.
Depending on its implementation, a compressed suffix array takes \spaceSA{} bits of space, 
and gives $\timeSA$ time access to $\SA{}$ and $\SA{}^{-1}$  -- see Figure~\ref{figSA} for a comparison of the uncompressed and a compressed suffix array.
With additional $\Oh{n}$ \cite{sadakane07compressed} or even $o(n)$ bits \cite{fischer10wee}, a compressed suffix tree can answer queries on the LCP-array that stores the values \lcp[{\SA{i}},{\SA{i+1}}] for each $1 \le i \le n-1$.
The last ingredient of a compressed suffix tree is the tree topology (either explicitly \cite{sadakane07compressed} or implicitly \cite{ohlebusch10cst++}), and $o(n)$-bit succinct data structures for navigating in it \cite{sadakane10fully,gog10advantages}.

For our purpose, we need the following queries on the suffix tree:
\lca[u,v] returns the lowest common ancestor of two nodes~$u$ and $v$,
\lab[e] returns the label of an edge~$e$,
\pathlabel[v] returns the labels on the edges of the path from the root to~$v$.
These queries can be answered by all common compressed suffix trees~\cite{ohlebusch10cst++,sadakane07compressed,fischer11combined,fischer09faster}.

\begin{figure}[t]
  \centerline{
    \begin{tabular}{cccc}
      \toprule
      $\spaceSA$                                                               & \timeSA{}    &  reference
      \\\midrule
      $2n \lg n$                                                               & {\Oh{1}}     &  \cite{manber93suffix}
      \\\midrule
      $(1+\epsilon) n \lg n$                                                   & {\Oh{\epsilon}}     &  \cite{Munro2012Sro}
       \\\midrule
      $n H_k + {\smallOh{n H_k}} + {\Oh{n + \sigma^{k+1}\lg n + n\lg n / s }}$ & {\Oh{\lg s}} &  \cite{BNtalg13}
      \\\bottomrule
    \end{tabular}
  }
  \caption{Different representations of the compressed suffix array using \spaceSA{}~bits with the time bound \timeSA{} for accessing a value of \SA{} and $\SA{}^{-1}$. The sampling rate $s$ satisfies $s = \omega(\lg_\sigma n)$.}
  \label{figSA}
\end{figure}

\subsection{Integer Dictionaries}
An \emph{integer dictionary} is a set consisting of tuples of the form $(k,v)$, where $k \in U := \I{1..\abs{U}}$ is an integer from a universe~$U$ with $\abs{U} = n^{\Oh{1}}$; we call $k$ a \emph{key} and $v$ a \emph{value}.
A common task is to find a tuple in a dictionary by a given key.
Besides, we might be interested in finding the \emph{successor} (\emph{predecessor}) of a key~$k$, i.e.,
the largest (smallest) key~$k'$ in the dictionary with $k' \le k$ ($k' \ge k$).
We define the operations $\Key{(k,v)} = k$ and $\Value{(k,v)} = v$.

A well-known \emph{dynamic} integer dictionary representation is the \emph{$y$-fast trie}~\cite{willard83loglogarithmic}.
It can perform lookups, predecessor and successor queries in $\Oh{\lg \lg n}$ expected time, 
and uses \Oh{n \lg n} bits of space for storing $n$~elements.
It consists of an $x$-fast trie whose leaves store binary search trees.
In more detail, the $x$-fast trie stores \Oh{n / \lg n} entries in \Oh{\lg n} hash tables, and each leaf stores \Oh{\lg n} entries in its balanced binary search tree.
Here, we only need a \emph{static} version.
Therefore, we use perfect hashing~\cite{fredman84storing} as our hashing method, 
resulting in \Oh{\lg \lg n} time w.h.p.\ in worst case for all queries, 
while keeping the same space bounds and linear deterministic construction time.
Alternatively, we can construct the hash tables in \Oh{n \lg \lg n} deterministic time~\cite[Theorem~1]{ruzic08constructing}, resulting in \Oh{\lg \lg n} deterministic worst case time for all queries.
Further, we exchange the balanced binary search trees with sorted arrays, which will be useful later when we parallelize the queries.

\section{Suffix Array Interval Merging}
\label{sec:suffix_interval_merging}
To perform the merging of two suffix array intervals in $\Oh{\timeSA \lg \lg n}$ time,
we adapt the idea from Lam et al.~\cite[Lemma 19]{Lam2007}.
In their method, the aim is to output all occurrences resulting from the merging of two suffix array intervals in $\Oh{\timeSA (\lg \lg n + \occ)}$ time.
Here, we show how to modify their approach such that only the resulting interval is returned, leading to $\Oh{\timeSA \lg \lg n}$ time. Although our method is similar to Lam et al.\ \cite{Lam2007}, we give the full proof for completeness.

The idea is to sample the $\Psi$- and $\lcp$-values of each $(\lg^2 n)$-th suffix array position.
The sampling is stored in $y$-fast tries such that a search in a sorted array can be broken down to a $y$-fast trie query, 
or to a binary search on a range of size \Oh{\lg^2 n} -- both can be performed in \Oh{\lg \lg n} time.
To lower the space consumption, the sampling is done only for certain nodes determined by the heavy path decomposition of the suffix tree,
whose definition follows.

\subsection{Heavy Path Decomposition}
The \emph{heavy path decomposition} of a rooted tree assigns a \emph{level} to each node of the tree.
The level of the root is~$1$.
A node inherits the level of its parent if its subtree is the largest among the subtrees of all its siblings (ties are broken arbitrarily); we call such a node \emph{heavy}.
Otherwise, it has the level of its parent incremented by one; we then call the node \emph{light}.
Further, we define the root to be light.
A maximal connected subgraph consisting of nodes on the same level is called \emph{heavy path}.
A heavy path starts with a light node, called \emph{head},
and ends at a heavy leaf.

\subsection{Precomputed Data Structures}
We first present a simple data structure for the child-operation \child[u,c] in a (compressed) suffix tree, i.e., for finding the child $v$ of $u$ such that the label of the edge between $u$ and $v$ starts with character $c$. We use $\Delta=\Omega(\lg n)$ as the sampling rate throughout this section.

\begin{lemma}\label{lemChildChar}
The suffix tree of a text of length~$n$ can be augmented with a data structure of size $\Oh{n \lg n/\Delta}$ bits answering \child[v,c] in \Oh{\timeSA \lg \Delta} time. 
\end{lemma}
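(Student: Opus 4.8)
The plan is to reduce \child[u,c] to a one-dimensional search inside the interval of $u$. Write $[b..e]=\Int{u}$ and let $d=\CARD{\pathlabel[u]}$ be the string depth of~$u$. Every suffix $\SUBSTR{\SA{i}}{n}$ with $i\in\Int{u}$ begins with $\pathlabel[u]$, so all these suffixes agree on their first $d$ characters; hence the characters $t_{\SA{i}+d}$ are non-decreasing as $i$ runs from $b$ to $e$, and the sought child $v$ is exactly the maximal sub-block $\Int{v}\subseteq\Int{u}$ on which this character equals~$c$. Using $\Psi^d$, the test $t_{\SA{i}+d}=c$ is equivalent to $\Psi^d\I{i}\in\Int{c}$, where $\Int{c}$ is the single-character interval of~$c$; since $\Psi^d$ is monotone on $\Int{u}$, locating $\Int{v}$ means finding the two positions where $\Psi^d\I{i}$ enters and leaves $\Int{c}$. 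Each evaluation $\Psi^d\I{i}=\SA{}^{-1}\I{\SA{i}+d}$ costs \Oh{\timeSA}, so a plain double binary search over $[b..e]$ already solves the problem in \Oh{\timeSA\lg n} time with no extra space.

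To bring the $\lg n$ down to $\lg\Delta$, I would sample every $\Delta$-th suffix array position and keep, for these \Oh{n/\Delta} samples, exactly the information needed to pin a boundary down to a window of $\Delta$ consecutive positions. One \Oh{\lg n}-bit record per sample gives the claimed \Oh{n\lg n/\Delta} bits. A query then runs in two phases: a \emph{coarse} phase that narrows the search to the $\Delta$-window containing the boundary in \Oh{\lg\lg n} time, and a \emph{fine} phase that performs the monotone binary search inside this window in \Oh{\lg\Delta} steps, each an \Oh{\timeSA} evaluation of $\Psi^d$. Since $\Delta=\Omega(\lg n)$ we have $\lg\lg n=\Oh{\lg\Delta}$, so both phases together cost \Oh{\timeSA\lg\Delta}; running this twice, once per boundary, only doubles the work. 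For the coarse phase I would reuse the mechanism advertised for this section, namely storing the sampled values in a static $y$-fast trie so that a search in the sorted sequence reduces either to one trie query or to a binary search on a range of size \Oh{\Delta}.

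\textbf{Main obstacle.} The comparison key $\Psi^d\I{i}$ depends on the query depth~$d$, so I cannot simply presort the samples into a single global key array and answer the coarse phase with one predecessor query: the same sampled position plays different roles for its different ancestors~$u$. The heart of the proof is therefore to make the coarse localization \emph{depth-oblivious}. I would store at each sample only depth-independent data --- the suffix pointer $\SA{i}$, which the CSA provides anyway, together with sampled \lcp{} values --- and argue that for every node~$u$ these suffice to identify the correct $\Delta$-window while touching only \Oh{\lg\lg n} samples. Proving that the stored, depth-free sample data really determines the right window for all ancestors simultaneously is the step I expect to demand the most care; once it is in place, the space and time bounds follow from the accounting above.
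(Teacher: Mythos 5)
Your proposal has a genuine gap, and it is precisely the one you flag yourself as the ``main obstacle.'' The reduction to a monotone search over $\Psi^d$ inside $\Int{u}$ and the plain $\Oh{\timeSA \lg n}$ bound are correct, but the entire content of the lemma is the $\lg\Delta$ bound, and your plan for the coarse phase --- globally sample every $\Delta$-th suffix array position, store only depth-independent data ($\SA{i}$ and sampled \lcp{} values), and then claim these locate the correct $\Delta$-window for every ancestor node simultaneously --- is never established, and it is doubtful it can be in this form. A coarse search that touches only $\Oh{\lg \lg n}$ samples needs the samples organized (sorted or hashed) by the query's comparison key; but that key, $T[\SA{i}+d]$, i.e.\ $\Psi^d\lbrack i\rbrack$, changes with the depth~$d$ of the query node, so depth-free stored records cannot be pre-organized by it. The alternative --- one search structure per node over the SA samples in its interval --- breaks the space bound, because a single suffix array position lies in the intervals of all its nested ancestors; this multiplicity is exactly why the paper's interval-merging structure (Section~\ref{sec:suffix_interval_merging}) needs the heavy path decomposition to control space, and no such mechanism appears in your proposal.

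The paper sidesteps the depth-dependence entirely by sampling different objects: for each internal node~$u$ it samples every $\Delta$-th \emph{child} of~$u$ and stores the sampled children in a $y$-fast trie owned by~$u$, keyed by the first character of the edge label \lab[u,v]. The keys are now plain characters, so there is no depth in the picture, and the space is automatically right: the numbers of children summed over all nodes are bounded by the number of edges of the suffix tree ($<2n$), so all tries together hold $\Oh{n/\Delta}$ elements, i.e.\ $\Oh{n \lg n/\Delta}$ bits. A query is a successor search for~$c$ in $u$'s trie ($\Oh{\lg\lg n}=\Oh{\lg\Delta}$ time, as $\Delta=\Omega(\lg n)$), followed by a binary search over the at most $\Delta$ children between two consecutive sampled children, where each comparison costs one $\Oh{\timeSA}$-time access to a first character --- giving $\Oh{\timeSA\lg\Delta}$ overall. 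If you replace your global SA-position sampling by this per-node child sampling, the rest of your accounting goes through.
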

\begin{proof}
  We sample the children of each internal node~$u$ and store the sampled children in a $y$-fast trie with the first character of the edge label 
  between $u$ and the respective child as key.
  Given a node~$u$ with $k$~children, we sample every $\Delta$-th child of~$u$ so that
  $u$'s $y$-fast trie contains $\gauss{k/\Delta}$ elements.
  Since the suffix tree has less than $2n$ nodes, storing the $y$-fast tries for all internal node takes $\Oh{n \lg n /\Delta}$ bits overall.

  Given a character~$c$, we search \child[u,c] in the following way:
  Since the children of a node~$u$ are sorted by the first character of the edge connecting~$u$ with its respective child, 
  the $y$-fast trie of~$u$ can retrieve the first child~$v$ whose edge label~$\lab[u,v]$ is lexicographically at least as large as~$c$.
  If $c$ is a prefix of \lab[u,v], then we are done.
  Otherwise, say that $v$ is the $i$-th child of $u$, we can find \child[u,c] by a binary search on the range between the $(i-\Delta)$-th child and 
  the $i$-th child in $\Oh{\timeSA \lg \Delta}$ time.
\end{proof}

\newcommand{\Klab}{k}
\newcommand{\IL}{\ensuremath{i_{\textup{l}}}}
\newcommand{\IR}{\ensuremath{i_{\textup{r}}}}
\newcommand{\JL}{\ensuremath{j_{\textup{l}}}}
\newcommand{\JR}{\ensuremath{j_{\textup{r}}}}
\newcommand{\KL}{\ensuremath{k_{\textup{l}}}}
\newcommand{\KR}{\ensuremath{k_{\textup{r}}}}
\newcommand{\plcp}[2]{\lcp[{\SA{#1}},{\SA{#2}}]}

We also need a simple \Oh{n}-bit data structure to find the heavy leaf of a given heavy path in constant time~\cite[Lemma 15]{Lam2007}.

Next, we define three types of integer dictionaries that we are going to index in $y$-fast tries to allow fast lookups. 
For every light node~$v$, we define the integer dictionary
\[
\Head{v} := \menge{(\Psi^{\abs{\pathlabel[v]}}[i],i) : i \equiv 1 \!\!\!\pmod \Delta \text{~and~} i \in \Int{v}}.
\]
Given a heavy leaf~$\ell$ and its head~$v$, we define the two integer dictionaries
\[
  \LLeaf{\ell} := \menge{(\plcp{\leafrank[\ell]}{i},i) : i \equiv 1 \!\!\!\pmod \Delta \text{~and~} i \in \Int{v} \text{~and~} i \le \leafrank[\ell] }
\]
  and
  \[
    \RLeaf{\ell} := \menge{(\plcp{\leafrank[\ell]}{i},i) : i \equiv 1 \!\!\!\pmod \Delta \text{~and~} i \in \Int{v} \text{~and~} i > \leafrank[\ell] }.
\]
We store \Head{v} in a $y$-fast trie for each light node~$v$,
\LLeaf{\ell} and \RLeaf{\ell} in a $y$-fast trie for each heavy leaf~$\ell$.
Given an interval~$\IntJ{} \subseteq [1..n]$, 
we can find 
\begin{itemize}
  \item an $i \in \Head{v}$    with $\B{\IntJ} \le \Key{i}   = \Psi^{\abs{\pathlabel[v]}}[\Value{i}]        \le \E{\IntJ}$,
  \item an $\IL \in \LLeaf{\ell}$ with $\B{\IntJ} \le \Key{\IL} = \plcp{\leafrank[\ell]}{\Value{\IL}} \le \E{\IntJ}$, and
  \item an $\IR \in \RLeaf{\ell}$ with $\B{\IntJ} \le \Key{\IR} = \plcp{\leafrank[\ell]}{\Value{\IR}} \le \E{\IntJ}$,
\end{itemize}
all in $\Oh{\timeSA \cdot \lg \Delta}$ time.

\begin{lemma}
  We need $\Oh{n\lg^2 n/ \Delta}$ bits of space to store the $y$-fast tries for all $\Head{\cdot}$, $\LLeaf{\cdot}$, and $\RLeaf{\cdot}$.
\end{lemma}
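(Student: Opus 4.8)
The claim is that all the $y$-fast tries for $\Head{\cdot}$, $\LLeaf{\cdot}$, and $\RLeaf{\cdot}$ together occupy $\Oh{n\lg^2 n/\Delta}$ bits. The plan is to count the total number of entries stored across all these tries, multiply by the per-entry cost, and then account for the $y$-fast trie overhead.

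First I would observe that a $y$-fast trie storing $s$ elements drawn from a universe of size $n^{\Oh{1}}$ uses $\Oh{s\lg n}$ bits, since each key and value fits in $\Oh{\lg n}$ bits and the trie overhead is linear in the number of elements (as recalled in the Integer Dictionaries subsection). Hence it suffices to bound the total number of stored tuples and multiply by $\Oh{\lg n}$.

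Next I would count the tuples. Every dictionary $\Head{v}$, $\LLeaf{\ell}$, and $\RLeaf{\ell}$ samples only those $i \equiv 1 \pmod \Delta$ lying in the relevant interval $\Int{v}$, so each such dictionary contains at most $\abs{\Int{v}}/\Delta + 1$ tuples. The heavy path decomposition is the crucial structural ingredient here: each light node $v$ is the head of exactly one heavy path, and the intervals $\Int{v}$ of the light nodes on any single root-to-leaf path correspond to a nested sequence whose sizes shrink geometrically (a light child's subtree is at most half the size of its parent's subtree). Consequently, summing $\abs{\Int{v}}$ over all light nodes $v$ gives $\Oh{n\lg n}$, because each of the $n$ leaves lies below $\Oh{\lg n}$ light nodes. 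Dividing by the sampling rate $\Delta$, the total number of entries across all $\Head{v}$ is $\Oh{n\lg n/\Delta}$. The dictionaries $\LLeaf{\ell}$ and $\RLeaf{\ell}$ are indexed by heavy leaves, but each $\LLeaf{\ell}$ (resp. $\RLeaf{\ell}$) samples a subset of $\Int{v}$ for the corresponding head $v$, so the same $\Oh{n\lg n/\Delta}$ bound on the total number of sampled entries applies to each family.

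Finally I would multiply: $\Oh{n\lg n/\Delta}$ entries times $\Oh{\lg n}$ bits per entry yields $\Oh{n\lg^2 n/\Delta}$ bits, which is the claimed bound. The main obstacle is the structural counting step — establishing that $\sum_v \abs{\Int{v}}=\Oh{n\lg n}$ over light nodes $v$; this is exactly where the heavy path decomposition pays off, since without the geometric shrinking of light-node subtree sizes the naive sum could be as large as $\Om{n^2}$. Everything else is routine bookkeeping.
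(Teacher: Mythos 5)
Your overall strategy is the same as the paper's: bound the total number of stored tuples by a double-counting argument based on the heavy path decomposition (each leaf has $\Oh{\lg n}$ light ancestors, by the halving property of light subtrees), then multiply by $\Oh{\lg n}$ bits per $y$-fast-trie element. Your treatment of $\LLeaf{\cdot}$ and $\RLeaf{\cdot}$ --- reducing to the $\Head{\cdot}$ count via the bijection between light heads and heavy leaves --- is a slightly cleaner packaging of what the paper does (the paper recounts directly: each of the $n/\Delta$ sampled leaves appears in the dictionaries of at most $\Oh{\lg n}$ heavy leaves).

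There is one step you should tighten. You bound each dictionary by $\abs{\Int{v}}/\Delta + 1$ tuples and then sum only the first terms, silently discarding one $+1$ per light node. The number of light nodes can be $\Theta(n)$ (in a binary suffix tree every internal node has a light child), so the discarded terms can contribute $\Theta(n)$ tuples, i.e.\ $\Theta(n\lg n)$ bits --- which exceeds the claimed $\Oh{n\lg^2 n/\Delta}$ bits exactly in the regime the paper needs ($\Delta = \lg^c n$ with $c \ge 2$, where the claim gives $\Oh{n}$ bits). The repair is to count exactly rather than via the ceiling bound: the tuples of $\Head{v}$ are in bijection with the sampled positions $i \equiv 1 \pmod \Delta$ lying in $\Int{v}$, so the total number of tuples equals the number of incidences between sampled positions and their light ancestors, which is at most $(n/\Delta)\cdot\Oh{\lg n}$ by your own observation applied to the $n/\Delta$ sampled leaves only; moreover, empty dictionaries are simply not stored, so they cost nothing. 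With this adjustment your argument coincides with the paper's proof.
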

\begin{proof}
Since the subtrees of the light nodes on the same level are disjoint,
summing over the sizes of \Head{v} for all light nodes~$v$ on the same level yields at most $n/\Delta$ elements.
Since the heavy path decomposition has at most $\Oh{\lg n}$ different levels
and a $y$-fast trie uses $\Oh{\lg n}$ bits per stored element, 
the claim for $\abs{\Head{v}}$ follows.

\newcommand{\ellH}{\ell_{\textup{H}}}
We analyze the size of \LLeaf{\cdot} by identifying a leaf with its \leafrank{}.
The sampling of~$\LLeaf{\cdot}$ considers only $n/\Delta$ leaves.
A leaf~$\ell$ has at most $\Oh{\lg n}$ light nodes as ancestors. 
So there are at most $\Oh{\lg n}$ heavy leaves~$\ellH$ having \leafrank[\ell] as a value in their dictionary~$\LLeaf{\ellH}$.
Hence, summing over $\LLeaf{\ellH}$ for all heavy leaves~$\ellH$ yields $\Oh{n \lg n/ \Delta}$ elements.
The same considerations lead to the same size bounds for~\RLeaf{\cdot}.
\end{proof}

\begin{lemma}\label{lem:suffix_interval_merging_time_delta}
  Given the compressed suffix tree of~$T$ and the dictionaries $\Head{\cdot}$, $\LLeaf{\cdot}$ and $\RLeaf{\cdot}$ as defined above, we can merge two suffix array intervals in \Oh{\timeSA \lg \Delta}~time.
\end{lemma}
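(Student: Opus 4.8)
The plan is to merge the intervals $\Int{\alpha}$ and $\Int{\beta}$ into $\Int{\alpha\beta}$ by reducing the problem to a single search in one of the precomputed dictionaries. First I would locate the node context of the right-side pattern $\beta$ in the suffix tree: since we are given the interval $\Int{\beta}$, its endpoints determine a (possibly implicit) locus, and I would walk to the corresponding light node $v$ whose interval $\Int{v}$ contains $\Int{\beta}$, or equivalently identify the head $v$ of the heavy path passing through the relevant leaf. The key observation is that an occurrence of $\alpha\beta$ corresponds to a suffix array position $i \in \Int{\alpha}$ such that the suffix $\SUBSTR{\SA{i}}{n}$, after skipping its first $\abs{\alpha}$ characters, lands inside $\Int{\beta}$; that is, $\Psi^{\abs{\alpha}}[i] \in \Int{\beta}$. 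This is precisely the condition encoded by $\Head{v}$, whose keys are the values $\Psi^{\abs{\pathlabel[v]}}[i]$, so by setting $\IntJ = \Int{\beta}$ I can query $\Head{v}$ for some sampled $i$ whose $\Psi$-value falls in $\Int{\beta}$, all in $\Oh{\timeSA \lg \Delta}$ time.

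Having found one sampled position $i$ that belongs to the merged interval, the next step is to extend it to the full boundaries $\B{\Int{\alpha\beta}}$ and $\E{\Int{\alpha\beta}}$. The idea is that $\Int{\alpha\beta}$ is a contiguous sub-range of $\Int{\alpha}$ characterized by having a sufficiently long longest common prefix with a reference suffix — concretely, the suffixes in $\Int{\alpha\beta}$ are exactly those sharing at least $\abs{\alpha\beta}$ characters with some witness suffix. This is where \LLeaf{\ell} and \RLeaf{\ell} enter: fixing the heavy leaf $\ell$ reached from $i$, the dictionaries store $\plcp{\leafrank[\ell]}{\cdot}$ values to the left and to the right of $\ell$, so searching them for the threshold $\abs{\alpha\beta}$ yields the sampled positions just inside and just outside the target boundary on each side. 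Two such dictionary searches, again $\Oh{\timeSA \lg \Delta}$ each, bracket the true endpoints to within a window of $\Delta$ sampled positions.

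Finally I would pin down the exact endpoints inside the $\Oh{\Delta}$-sized windows returned by the dictionary searches. Because the sampling takes every $\Delta$-th position, the true boundary lies between two consecutive samples, and within such a window the $\lcp$ values are monotone along the suffix array, so a binary search over the window locates the precise transition from "$\lcp \ge \abs{\alpha\beta}$" to "$\lcp < \abs{\alpha\beta}$". Each comparison costs one $\lcp$ query, realizable through the compressed suffix tree via \lca{} on the two leaves together with \pathlabel{} length, in $\Oh{\timeSA}$ time; a binary search over $\Oh{\Delta}$ elements therefore contributes $\Oh{\timeSA \lg \Delta}$. Summing the constant number of dictionary queries and binary searches gives the claimed $\Oh{\timeSA \lg \Delta}$ total.

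The main obstacle I anticipate is the correctness bookkeeping rather than the time analysis: one must argue that the node $v$ and heavy leaf $\ell$ selected from the left-side interval genuinely govern the boundaries of $\Int{\alpha\beta}$, and in particular that restricting attention to the heavy-path sampling of a single light node suffices to capture both endpoints. The delicate case is when $\Int{\alpha\beta}$ straddles the heavy leaf $\ell$, or when it is empty because no sampled $\Psi$-value of $\Head{v}$ falls in $\Int{\beta}$ — the latter forcing a fallback to a local binary search within one sampling block of size $\Delta$ to decide emptiness or recover the (at most $\Delta$) unsampled candidates. Verifying that these fallbacks never exceed $\Oh{\timeSA \lg \Delta}$ and that the left/right dictionaries correctly separate the two sides of $\ell$ is the heart of the argument, and is exactly the part inherited with care from Lam et al.~\cite{Lam2007}.
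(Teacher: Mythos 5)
Your proposal anchors the whole search at the wrong side, and this is not a cosmetic slip. You choose the light node $v$ from the locus of the \emph{right} pattern $\beta$ (``whose interval $\Int{v}$ contains $\Int{\beta}$'') and then claim $\Head{v}$ ``precisely encodes'' the condition $\Psi^{\abs{\alpha}}[i]\in\Int{\beta}$ for $i\in\Int{\alpha}$. It does not: by definition $\Head{v}$ samples positions $i\in\Int{v}$ and stores keys $\Psi^{\abs{\pathlabel[v]}}[i]$, so for $v$ chosen from $\beta$'s locus it records where suffixes of $\Int{\beta}$ land after skipping $\abs{\pathlabel[v]}$ characters --- it contains no information about positions of $\Int{\alpha}$ landing in $\Int{\beta}$. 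The dictionary you need is attached to a node covering the \emph{left} interval $\Int{\alpha}$, and its shift must equal $\abs{\alpha}$. The paper therefore takes $v$ to be the node with $\Int{v}=\Int{\alpha}$, and here the central difficulty appears, which your sketch never resolves: that node may be \emph{heavy}, i.e.\ lie strictly inside a heavy path, in which case its nearest light ancestor (the head $u$) satisfies $\abs{\pathlabel[u]}<\abs{\alpha}$, so the keys of $\Head{u}$ use the wrong shift and cannot test the merge condition at all. The paper's heavy case exists exactly for this situation: it computes $q$, the length of the longest common prefix of $P=\alpha\beta$ and the path label of the heavy leaf $\ell$ minus $\abs{\alpha}$ (one lcp query), uses $\LLeaf{\ell}$ and $\RLeaf{\ell}$ with the threshold $\abs{\alpha}+q$ to locate the deepest heavy-path node $r$ whose path label is a prefix of $P$, descends from $r$ by one child step (Lemma~\ref{lemChildChar}) to a \emph{light} node, and only then runs the two-sided $\Head{\cdot}$ search (with the $\Delta$-window fallback you mention) to get the boundaries.

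This also shows why your boundary-finding step fails as stated. You propose searching $\LLeaf{\ell}$ and $\RLeaf{\ell}$ for the threshold $\abs{\alpha\beta}$; but the set of positions whose suffixes share at least $\abs{\alpha\beta}$ characters with $\ell$'s suffix is the interval of $\ell$'s ancestor at string depth $\abs{\alpha\beta}$, and this equals $\Int{\alpha\beta}$ only in the lucky case that the heavy leaf $\ell$ is itself an occurrence of $\alpha\beta$. In general the pattern deviates from $\ell$'s suffix at depth $\abs{\alpha}+q<\abs{\alpha\beta}$ --- this is the typical situation, not a corner case --- and then your threshold search returns an unrelated (possibly empty) interval. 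That is precisely why the paper searches for $\abs{\alpha}+q$ and uses the lcp dictionaries only to find the node $r$, never to read off the boundaries of $\Int{\alpha\beta}$ directly. The issues you defer as ``correctness bookkeeping'' (which node governs the search, and what to do when it is heavy) are the actual content of the proof, and the mechanism you propose in their place is incorrect.
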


\begin{proof}
  Let $\Int{\alpha}$ and $\Int{\beta}$ be two suffix array intervals and $P := \alpha \beta$.
  Our task is to search the interval $\Int{P} \subseteq \Int{\alpha}$ with 
  $\Psi^{\abs{\alpha}}[i] \in \Int{\beta}$ for all $i \in \Int{P}$.
  Since $i \mapsto \Psi^{\abs{\alpha}}[i]$ is monotonically increasing for $i \in \Int{\alpha}$, 
  the merge could be solved with two binary searches in~$\Int{\alpha}$.
  To obtain the $\Oh{\timeSA \lg \Delta}$ time bound we will either use the $y$-fast tries, or perform a binary search on $\Oh{\Delta}$-large intervals.

  Let us take the node $v$ whose suffix array interval is \Int{\alpha}, i.e., 
  the lowest common ancestor of the leaves with \leafrank[]~$\B{\Int{\alpha}}$ and~$\E{\Int{\alpha}}$.
  We consider two cases:

  \begin{description}
  \item[Node $v$ is heavy.]
  Let $H$ be the heavy path to which $v$ belongs, $\ell$ its heavy leaf, and $u$ its head.

  If $\Head{u}$
  is empty, there are less than $\Delta$ leaves in the subtree rooted at $u$.
  Since $\Int{P} \subset \Int{u}$, we can find $\Int{P}$ by binary search in \Oh{\timeSA \lg \Delta}.

  Otherwise ($\Head{u} \not= \emptyset$), 
  let $q := \plcp{\Psi^{\abs{\alpha}}[{\leafrank[\ell]} ]}{\B{\Int{\beta}}}$.
  The value~$q$ is the length of the longest common prefix of~$P$ 
  and the path label of~$\ell$, subtracted by $\abs{\alpha}$.
  By definition of~$q$, there is a node~$r$ on~$H$ whose path label coincides with $\alpha\beta[1..q]$.
  In particular, this is the node on the path~$H$ whose path label is the longest prefix of~$P$ with respect to the path labels of all other nodes on $H$.
  Since $\Int{P} \subset \Int{r}$, our task is to find~$r$ in $\Oh{\timeSA \lg \Delta}$ time.
  To this end, we locate a leaf whose LCA with~$\ell$ is~$r$.

  The interval boundaries can be found by a coarse search on the $y$-fast tries of~$\LLeaf{\ell}$ and~$\RLeaf{\ell}$,
  and a subsequent refinement step using binary search.
Let $\Klab := \leafrank[\ell]$.
Since $i \mapsto \plcp{i}{k}$ is monotonically increasing for $i < \Klab$, and monotonically decreasing for $i > \Klab$, 
we can perform the binary search for a value on the key-sorted integer dictionaries \menge{(\plcp{i}{\Klab},i) : i < \Klab} and 
\menge{(\plcp{i}{\Klab},i) : i > \Klab} conceptionally.
The $y$-fast tries at \LLeaf{\ell} and \RLeaf{\ell} help us computing
  the tuple
  $\JL \in \LLeaf{\ell} \cup\menge{(\abs{\pathlabel[\ell]},\Klab)}$ with 
  \[
    \plcp{\Value{\JL} - \Delta}{\Klab} \le \abs{\alpha} + q \le \Key{\JL} = \plcp{\Value{\JL}}{\Klab}
  \]
  and the tuple
  $\JR \in \RLeaf{\ell} \cup\menge{(\abs{\pathlabel[\ell]},\Klab)}$ with 
  \[
    \Key{\JR} = \plcp{\Value{\JR}}{\Klab} \le \abs{\alpha} + q \le \plcp{\Value{\JR} + \Delta}{\Klab}.
  \]
  Since $\plcp{\Value{\JL} - \Delta}{\Klab} \le \abs{\alpha} + q \le \plcp{\Value{\JR} + \Delta}{\Klab}$, 
  we can find one of the positions $\IL \in [\Value{\JL} - \Delta .. \Value{\JL}]$ and $\IR \in [\Value{\JR} .. \Value{\JR} + \Delta]$ by binary search
  such that $\plcp{\IL}{\Klab} = \plcp{\IR}{\Klab} = \abs{\alpha} + q$.
  The binary search takes $\Oh{\timeSA \lg \Delta}$ time.
  On finding $\IL$ or $\IR$, we can retrieve $r$, i.e., the lowest common ancestor of $\ell$ and the $\IL$-th or $\IR$-th leaf.
  If the pattern~$P$ is a prefix of the path label of~$r$, then $\Int{P} = \Int{r}$, and we are done.
  Otherwise, we choose the child~$w$ of~$r$ whose edge label~$S$ starts with $\beta[q+1]$;
  $w$ can be retrieved in $\Oh{\timeSA \lg \Delta}$ time by Lemma~\ref{lemChildChar}.
  The child~$w$ must be a light node, for otherwise we get a contradiction to the definition of~$r$.
  We set $v \gets w$, $\alpha \gets P[1..\abs{\alpha}+q+\abs{S}]$, $\beta \gets P[\abs{\alpha}+1..\abs{P}]$, and jump to the next case:

  \item[Node $v$ is light.]
  If $\Head{v}$ is empty, then $\abs{\Int{v}} < \Delta$.
  Therefore, we can find the interval boundaries of $\Int{P}$ in \Int{v} with a binary search in \Oh{\timeSA \lg \Delta} time.
  Otherwise, 
  we use the $y$-fast trie storing~$\Head{v}$ to find
  the tuple $\JL \in \Head{v}$ with the smallest key satisfying $\B{\Int{\beta}} \le \Key{\JL} = \Psi^{\abs{\alpha}}[\Value{\JL}]$ and the tuple
  $\JR \in \Head{v}$ with the largest key satisfying $\Key{\JR} = \Psi^{\abs{\alpha}}[\Value{\JR}] \le \E{\Int{\beta}}$.
  If both exist, we can find the positions $\B{\Int{P}} \in [\Value{\JL}-\Delta .. \Value{\JL}]$ and $\E{\Int{P}} \in [\Value{\JR} .. \Value{\JR}+\Delta]$ by two binary searches.
  If there is no tuple~$i \in \Head{v}$ with $\B{\Int{\beta}} \le \Key{i} \le \E{\Int{\beta}}$, we search with the $y$-fast trie of \Head{v}
  the tuple~$\KL \in \Head{v}$ with 
  \[
  \Key{\KL} = \Psi^{\abs{\alpha}}[\Value{\KL}] \le \B{\Int{\beta}} \le \Psi^{\abs{\alpha}}[\Value{\KL} + \Delta]
  \]
  and the  tuple~$\KR \in \Head{v}$ with 
  \[
  \Psi^{\abs{\alpha}}[\Value{\KR} - \Delta] \le \E{\Int{\beta}} \le \Psi^{\abs{\alpha}}[\Value{\KR}] = \Key{\KR}.
  \]
  Both values exist, and $\Value{\KR} - \Value{\KL} \le \Delta$. 
  So we find the interval~\Int{P} by applying two binary searches to the range $\Value{\KL} .. \Value{\KR}$.
\end{description}  
\end{proof}

Setting $\Delta := \lg^{c} n$ for $c \ge 2$ yields:
\begin{lemma}
  \label{lem:suffix_interval_merging_time}
  Given the compressed suffix tree of~$T$, there is a data structure of size \Oh{n}~bits that allows us to merge two suffix array intervals in \Oh{\timeSA \lg \lg n}~time.
\end{lemma}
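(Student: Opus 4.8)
The plan is to obtain this statement as a direct corollary of Lemma~\ref{lem:suffix_interval_merging_time_delta} together with the preceding space analysis, by fixing the sampling rate $\Delta$ appropriately. Recall that the merging procedure relies on two families of auxiliary structures: the dictionaries $\Head{\cdot}$, $\LLeaf{\cdot}$, and $\RLeaf{\cdot}$, whose total size is $\Oh{n \lg^2 n / \Delta}$ bits, and the child data structure of Lemma~\ref{lemChildChar}, which occupies $\Oh{n \lg n / \Delta}$ bits. First I would verify that setting $\Delta := \lg^c n$ with $c \ge 2$ simultaneously drives both quantities down to $\Oh{n}$ bits: the dictionary term becomes $\Oh{n \lg^2 n / \lg^c n} = \Oh{n \lg^{2-c} n}$, which is $\Oh{n}$ for $c = 2$ and $\smallOh{n}$ for $c > 2$, while the child-structure term $\Oh{n \lg^{1-c} n}$ is $\smallOh{n}$ already. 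Hence the entire augmentation fits in $\Oh{n}$ bits on top of the compressed suffix tree.

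Next I would substitute the same value of $\Delta$ into the running time. Lemma~\ref{lem:suffix_interval_merging_time_delta} guarantees a merge in $\Oh{\timeSA \lg \Delta}$ time, and since $\lg \Delta = \lg(\lg^c n) = c \lg \lg n$, the constant factor $c$ is absorbed into the $\Oh{\cdot}$-notation, yielding the claimed $\Oh{\timeSA \lg \lg n}$ bound. The same substitution applied to the child query of Lemma~\ref{lemChildChar} gives $\Oh{\timeSA \lg \lg n}$ as well, so every subroutine invoked during a merge respects the target time bound.

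There is no genuine obstacle here, since the statement is a parameter instantiation, but the one point I would be careful to check is that both space terms and both time terms are controlled by the single choice of $\Delta$, so that no hidden structure (in particular the child dictionary of Lemma~\ref{lemChildChar}) blows up the space past $\Oh{n}$ bits or the time past $\Oh{\timeSA \lg \lg n}$. Once the two space contributions are seen to be $\Oh{n}$ and $\smallOh{n}$ respectively, and the two time contributions are both $\Oh{\timeSA \lg \lg n}$, the lemma follows immediately.
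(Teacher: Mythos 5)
Your proposal is correct and follows exactly the paper's route: the paper obtains Lemma~\ref{lem:suffix_interval_merging_time} from Lemma~\ref{lem:suffix_interval_merging_time_delta} simply by setting $\Delta := \lg^c n$ for $c \ge 2$, which is precisely your parameter instantiation. Your additional bookkeeping (checking that both the dictionary space $\Oh{n\lg^2 n/\Delta}$ and the child-structure space of Lemma~\ref{lemChildChar} collapse to $\Oh{n}$ bits, and that $\lg\Delta = c\lg\lg n$ gives the stated time) just spells out details the paper leaves implicit.
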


\section{Parallel Exact Pattern Matching}
\label{sec:parallelization}
We parallelize the merging of suffix array intervals that we presented in Section~\ref{sec:suffix_interval_merging} and show that queries in the suffix tree using consecutive subpatterns and linear space can be solved in parallel on a CREW-PRAM\@. 
For this, we use parallel binary search:

\begin{lemma}[{\cite[Theorem 2.1]{Snir1985}}]\label{lemParallelBinSearch}
  Given a sorted array of size~$n$, a binary search requires \Oh{1+\lg_p n} time when operating on a CREW-PRAM with $p$~processors.
\end{lemma}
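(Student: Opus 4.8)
The plan is to replace the factor-two contraction of an ordinary binary search by a factor-$(p+1)$ contraction, turning the search into a \emph{$(p+1)$-ary search}. Throughout, I maintain a candidate interval $[\ell..r]$ of the sorted array $A$ that is guaranteed to contain the target position, starting from $[\ell..r] = [1..n]$. In each round I place $p$ probes $q_1 < q_2 < \dots < q_p$ at (roughly) equally spaced positions inside $[\ell..r]$, splitting the interval into $p+1$ blocks each of length at most $\gauss{(r-\ell+1)/(p+1)}+1$. Processor~$j$ reads the single entry $A[q_j]$ and compares it with the search key~$x$; from the $p$ outcomes the target lies in exactly one of the $p+1$ blocks, and I recurse on that block.

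The key step is to identify that block and update $[\ell..r]$ in $\Oh{1}$ parallel time \emph{without} a concurrent write. I let processor~$j$ evaluate the bracketing predicate ``$A[q_j] \le x < A[q_{j+1}]$'', using sentinels $A[q_0] = -\infty$ and $A[q_{p+1}] = +\infty$ at the two ends. Note that $A[q_{j+1}]$ is thereby inspected by both processor~$j$ and processor~$j+1$, i.e.\ some positions are subject to a \emph{concurrent read}, which is exactly what the CREW model permits. Because $A$ is sorted, this predicate is satisfied by precisely one index~$j$, so precisely one processor writes the new boundaries $[\ell..r]$. Hence the write is exclusive and the entire round costs $\Oh{1}$ time.

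Finally I would bound the number of rounds. Each round shrinks the interval length by a factor of roughly $p+1$, so after $t$ rounds the length is at most $n/(p+1)^t$, and the search terminates as soon as this drops to a constant, i.e.\ after $t = \Oh{\log_{p+1} n} = \Oh{1 + \lg_p n}$ rounds. The additive~$1$ absorbs the regime $p \ge n$, where a single round with $n$ probes already isolates the answer and $\lg_p n \le 1$. Multiplying the $\Oh{1}$ cost per round by the number of rounds gives the claimed $\Oh{1 + \lg_p n}$ bound.

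The main obstacle is the exclusive-write requirement in the block-selection step: a naive approach in which every processor that detects $A[q_j] \le x$ writes a new left boundary would cause a write conflict, violating the CREW restriction. The bracketing predicate with sentinels is what guarantees a unique writer per round; once that is in place, everything else is routine accounting of the geometric shrinkage.
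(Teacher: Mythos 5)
Your proof is correct and follows essentially the same route as the source behind this lemma: the paper itself offers no proof (it imports the statement from Snir's Theorem~2.1), and that cited bound is obtained exactly by your $(p+1)$-ary search, including the point you emphasize that exactly one processor satisfies the bracketing predicate $A[q_j] \le x < A[q_{j+1}]$, so writes stay exclusive while the overlapping probe reads are covered by concurrent read. The accounting of $\Oh{\log_{p+1} n} = \Oh{1+\lg_p n}$ rounds, including the $p \ge n$ regime absorbed by the additive constant, is also sound, so nothing further is needed.
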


We conclude that we can parallelize the query on $y$-fast tries in the same way:

\begin{lemma}
  \label{lem:parallel_y_fast_trie}
  A $y$-fast trie
  can do lookups, predecessor and successor queries 
  in $\Oh{1+\lg_p\lg n}$ time using $p$ processors.
\end{lemma}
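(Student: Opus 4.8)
The plan is to observe that a sequential $y$-fast trie query decomposes into \emph{two} binary searches, each over a structure of size $\Oh{\lg n}$, and to replace each of them by the parallel binary search of Lemma~\ref{lemParallelBinSearch}. Recall the two phases of a predecessor (equivalently successor) query for a key~$k$. In the first phase we operate in the top $x$-fast trie, whose depth equals the key length $\Oh{\lg n}$, and we determine the deepest level~$\ell$ at which the length-$\ell$ prefix of~$k$ is a node of the trie. From the node found, the stored descendant pointers identify the representative leaf that is the predecessor (successor) among the $\Oh{n/\lg n}$ representatives, and hence the bucket into which~$k$ falls. In the second phase we locate the exact predecessor (successor) of~$k$ inside that bucket; because we deliberately replaced the balanced search trees by sorted arrays, this is simply a binary search on a sorted array of $\Oh{\lg n}$ elements.

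To parallelize the first phase, I would note that the predicate ``the length-$\ell$ prefix of~$k$ is present'' is monotone (downward closed) in~$\ell$, since the trie is prefix-closed; so the deepest present level is the boundary of an all-true-then-all-false boolean array indexed by the $\Oh{\lg n}$ levels, and Lemma~\ref{lemParallelBinSearch} applies. Each of the $p$ simultaneous probes in a round tests one candidate level by computing the corresponding prefix of~$k$ (an $\Oh{1}$-time masking operation) followed by one perfect-hash lookup, which is $\Oh{1}$ in the worst case; hence every round costs $\Oh{1}$ and the whole search takes $\Oh{1+\lg_p\lg n}$ time. The second phase is an immediate application of Lemma~\ref{lemParallelBinSearch} to the sorted bucket array of size $\Oh{\lg n}$, again costing $\Oh{1+\lg_p\lg n}$. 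Summing the two phases yields the claimed bound; a plain lookup is answered by a predecessor query followed by an equality test and therefore obeys the same bound.

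The point that needs care — and the only one I expect to be delicate — is that each round of the two parallel binary searches truly costs $\Oh{1}$ and nothing more. For the level search this rests on perfect hashing giving $\Oh{1}$ worst-case lookups, so that the concurrent reads issued by the $p$ processors fit the CREW model; for the bucket search it rests on our use of sorted arrays rather than balanced search trees, since Lemma~\ref{lemParallelBinSearch} is stated only for sorted arrays and a balanced tree would not admit the same $\Oh{1+\lg_p\lg n}$ parallel search. Both ingredients are already established in the preliminaries, so beyond verifying the monotonicity of the level predicate and the $\Oh{1}$ per-probe cost, no genuine obstacle remains.
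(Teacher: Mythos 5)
Your proof is correct and follows essentially the same route as the paper's: a parallel binary search (Lemma~\ref{lemParallelBinSearch}) over the $\Oh{\lg n}$ levels/hash tables of the $x$-fast trie, followed by a parallel binary search on the sorted array at the leaf. You simply spell out the details (monotonicity of the level predicate, $\Oh{1}$ worst-case perfect-hash probes, the replacement of balanced trees by sorted arrays) that the paper's two-sentence proof leaves implicit.
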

\begin{proof}
  We can find an element in an $x$-fast trie in $\Oh{1+\lg_p\lg n}$ time using parallel binary search (Lemma~\ref{lemParallelBinSearch}) on the $\Oh{\lg n}$ hash tables.
  The sorted arrays stored at the leaves can similarly be searched in \Oh{1+\lg_p\lg n} time, again using Lemma~\ref{lemParallelBinSearch}.
\end{proof}

Let us focus on the merging of two suffix array intervals as treated in Section~\ref{sec:suffix_interval_merging}.
The dominant term of its running time is due to the query time of the $y$-fast tries and the binary searches.
As we can parallelize both, a parallelization of the merging algorithm improves the time bounds significantly:
\begin{lemma}
  \label{lem:parallel_merging}
  Given $p$ processors and two suffix array intervals \Int{\alpha} and \Int{\beta}, the merged interval \Int{\alpha\beta} can be computed 
  in $\Oh{\timeSA (1+\lg_p\lg n)}$ time 
  and\linebreak $\Oh{\timeSA\min\left(p,\lg n\right) (1+\lg_p\lg n)}$ work.
\end{lemma}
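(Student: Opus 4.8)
The plan is to take the sequential merging procedure from Lemma~\ref{lem:suffix_interval_merging_time_delta} and parallelize every time-consuming primitive it invokes, namely the $y$-fast trie queries and the binary searches on ranges of size $\Oh{\Delta}$. Recall that the sequential algorithm, with $\Delta = \lg^c n$, runs in $\Oh{\timeSA \lg \Delta} = \Oh{\timeSA \lg \lg n}$ time. Its control flow is a constant number of such primitive operations at each light node visited, interleaved with a $\child$-query (Lemma~\ref{lemChildChar}) and a descent to the next light node. The key structural observation is that, because each $\child$-query moves us from a node to a \emph{light} child, and because any root-to-leaf path contains at most $\Oh{\lg n}$ light nodes, the descent loop iterates at most $\Oh{\lg n}$ times; this bound is what the sequential analysis already relies on implicitly and is the source of the $\lg n$ factor in the work bound.

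First I would replace each $y$-fast trie query by its parallel counterpart: by Lemma~\ref{lem:parallel_y_fast_trie}, a lookup, predecessor, or successor query on a $y$-fast trie runs in $\Oh{1 + \lg_p \lg n}$ time with $p$ processors. Next I would parallelize the refinement binary searches. Each such search operates on a range of size $\Oh{\Delta} = \Oh{\lg^c n} = \Oh{\lg n}^{\Oh{1}}$, so by Lemma~\ref{lemParallelBinSearch} it runs in $\Oh{1 + \lg_p \Delta} = \Oh{1 + \lg_p \lg n}$ time, using that $\lg \Delta = \Oh{\lg \lg n}$ and hence $\lg_p \Delta = \Oh{\lg_p \lg n}$. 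Finally, the $\child$-query of Lemma~\ref{lemChildChar} itself reduces to a $y$-fast trie query plus a binary search over a range of $\Delta$ children, so it too parallelizes to $\Oh{\timeSA(1 + \lg_p \lg n)}$ time by the same two lemmas. Every primitive executed along the way thus costs $\Oh{\timeSA(1 + \lg_p \lg n)}$ parallel time.

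The main obstacle is controlling the \emph{iterated} cost: a naive accounting would multiply the per-step cost $\Oh{\timeSA(1 + \lg_p \lg n)}$ by the $\Oh{\lg n}$ light nodes on the descent path, giving a spurious $\lg n$ factor in the time. The resolution is that the descent into the light-node case happens at most once. Inspecting the proof of Lemma~\ref{lem:suffix_interval_merging_time_delta}, the heavy-node case performs one coarse search and refinement to locate the node $r$, then invokes a single $\child$-query to find the light child $w$ and falls through to the light-node case; the light-node case then finishes with two parallel binary searches and returns $\Int{P}$. Thus the whole merge executes only a constant number of $y$-fast trie queries, binary searches, and $\child$-queries — not $\Oh{\lg n}$ of them — so the total parallel time is $\Oh{\timeSA(1 + \lg_p \lg n)}$.

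For the work bound, I would assign at most $\min(p, \lg n)$ processors to each primitive. Lemma~\ref{lemParallelBinSearch} and Lemma~\ref{lem:parallel_y_fast_trie} are efficient in the sense that with $p' = \min(p,\lg n)$ processors each primitive runs in $\Oh{\timeSA(1 + \lg_{p'} \lg n)} = \Oh{\timeSA(1 + \lg_p \lg n)}$ time while performing $\Oh{\timeSA\, p'(1 + \lg_{p'}\lg n)}$ work; capping the processor count at $\lg n$ avoids wasting processors on arrays and hash-table families of size $\Oh{\lg n}$, where additional processors cannot reduce the running time below a constant. Since a constant number of primitives is executed, summing their work yields $\Oh{\timeSA \min(p,\lg n)(1 + \lg_p \lg n)}$ total work, matching the claimed bound.
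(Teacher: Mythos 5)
Your proof is correct and follows essentially the same route as the paper: parallelize the $y$-fast trie queries and the $\Oh{\Delta}$-sized binary searches of Lemma~\ref{lem:suffix_interval_merging_time_delta} via Lemmas~\ref{lemParallelBinSearch} and~\ref{lem:parallel_y_fast_trie}, observe that only a constant number of such primitives is executed, and cap the useful processor count at $\Oh{\lg n}$ to obtain the work bound. The only blemish is your first paragraph's claim that an $\Oh{\lg n}$-iteration descent underlies the sequential analysis and is the source of the $\lg n$ factor in the work bound --- as your own third paragraph then correctly establishes, the descent into the light-node case happens at most once, and the $\min\left(p,\lg n\right)$ factor comes solely from the number of processors that can be usefully employed in each search, which is exactly how the paper argues it.
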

\begin{proof}
  We can merge two suffix array intervals in $\Oh{\timeSA \lg\lg n}$ time using Lemma \ref{lem:suffix_interval_merging_time}. 
  Recalling the proof of Lemma~\ref{lem:suffix_interval_merging_time_delta}, we took the node $v$ whose suffix array interval is $\Int{\alpha}$. 
  There, in both cases ($v$ is either heavy or light), the time is dominated by searching in $y$-fast tries, and/or by binary searching in \Oh{\Delta} sampled $\Psi$- or \lcp{}-values.
  Both can be parallelized by Lemmas \ref{lemParallelBinSearch} and \ref{lem:parallel_y_fast_trie}, respectively. This yields 
  $\Oh{\timeSA (1+\lg_p\lg n)}$ time using $p$ processors. 
  During the parallel searches, we use at most \Oh{\lg n} processors \Oh{1+\lg_p\lg n} times. 
  This amounts to $\Oh{\timeSA\min\left(p,\lg n\right) (1+\lg_p\lg n)}$ work.
\end{proof}

Being able to merge two suffix array intervals in parallel, we now show how to compute the suffix array interval of a pattern $P$ in parallel.
To this end, we decompose the pattern in subpatterns $\alpha_1,\dots,\alpha_p$ such that $P=\alpha_1\alpha_2\dots\alpha_p$, and then compute the suffix array intervals for the subpatterns.
Then we merge those intervals in parallel.
\begin{theorem}
  Given a text of size $n$ and a pattern of size $m$. With $p \le m$ processors, we can compute the suffix array interval of the pattern in 
  $\Oh{\timeSA \left(\frac{m}{p}+ \lg p + \lg\lg p\cdot\lg\lg n\right)}$ time and $\Oh{\timeSA\left( m+ \min\left(p,\lg n\right)\left(\lg p + \lg\lg p\cdot\lg\lg n\right)\right)}$ work.
    In order to achieve this time bound we need an index of size $\spaceSA + \Oh{n}$ bits.
  \label{the:parallel_exact}
\end{theorem}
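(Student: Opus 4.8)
The plan is to realize the two-phase strategy sketched just before the statement: first match $p$ subpatterns independently, then combine their intervals through a balanced binary merge tree whose processor budget is reallocated level by level. Concretely, I would split $P$ into $p$ consecutive subpatterns $\alpha_1,\dots,\alpha_p$, each of length at most $\lceil m/p\rceil$, and assign one processor to each. A single processor computes $\Int{\alpha_i}$ by $\Oh{m/p}$ single-character interval extensions (e.g.\ via backward search in the compressed suffix array, or by descending the compressed suffix tree character by character), each costing $\Oh{\timeSA}$ time; hence this matching phase runs in $\Oh{\timeSA\cdot m/p}$ time, and since all $p$ processors work in parallel with $\Oh{\timeSA\cdot m/p}$ work each, it uses $\Oh{\timeSA m}$ work. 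This already contributes the $\frac{m}{p}$ time term and the $m$ work term.

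The second phase arranges $\Int{\alpha_1},\dots,\Int{\alpha_p}$ as the leaves of a balanced binary tree of depth $\lceil\lg p\rceil$ and, at each internal node, merges its two child intervals into the interval of their concatenation using Lemma~\ref{lem:parallel_merging}. The key observation is that the number of pending merges shrinks geometrically as we climb the tree, so we can afford geometrically more processors per remaining merge. Counting $j=1$ at the first merge level above the leaves up to $j=\lceil\lg p\rceil$ at the root, level $j$ contains $p/2^{j}$ merges; distributing the $p$ processors evenly gives $2^{j}$ processors per merge, so by Lemma~\ref{lem:parallel_merging} each merge on level $j$ finishes in $\Oh{\timeSA(1+\lg_{2^{j}}\lg n)}=\Oh{\timeSA\bigl(1+\frac{\lg\lg n}{j}\bigr)}$ time.

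The heart of the argument --- and the step I expect to be the main obstacle --- is the summation over levels. Adding the per-level merge times yields $\Oh{\timeSA\sum_{j=1}^{\lceil\lg p\rceil}\bigl(1+\frac{\lg\lg n}{j}\bigr)}=\Oh{\timeSA\bigl(\lg p+\lg\lg n\sum_{j=1}^{\lceil\lg p\rceil}\frac{1}{j}\bigr)}$, and the harmonic bound $\sum_{j=1}^{\lceil\lg p\rceil}\frac{1}{j}=\Oh{\lg\lg p}$ collapses this to $\Oh{\timeSA(\lg p+\lg\lg p\cdot\lg\lg n)}$. The point to get right is that the naive schedule --- running every level with a fixed $\Oh{\lg\lg n}$-time merge --- would cost $\Oh{\timeSA\lg p\cdot\lg\lg n}$, whereas the geometric reallocation of processors replaces the prefactor $\lg p$ of $\lg\lg n$ by the harmonic number $\Oh{\lg\lg p}$. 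Combining this with the matching-phase time gives the claimed $\Oh{\timeSA(\frac{m}{p}+\lg p+\lg\lg p\cdot\lg\lg n)}$ running time.

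For the work and space, I would bound the merge-phase work by summing, over all $p-1$ merges, the per-merge work $\Oh{\timeSA\min(p',\lg n)(1+\lg_{p'}\lg n)}$ of Lemma~\ref{lem:parallel_merging}, using that a single merge profitably employs at most $\min(p,\lg n)$ processors; together with the $\Oh{\timeSA m}$ matching work and the same harmonic estimate, this gives the stated $\Oh{\timeSA(m+\min(p,\lg n)(\lg p+\lg\lg p\cdot\lg\lg n))}$ work. Finally, the only structures needed beyond the compressed suffix array are the $\Oh{n}$-bit merging data structure of Lemma~\ref{lem:suffix_interval_merging_time} (and the $o(n)$-bit child structure of Lemma~\ref{lemChildChar} it relies on), so the total index occupies $\spaceSA+\Oh{n}$ bits, as claimed.
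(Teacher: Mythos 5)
Your proposal is correct and follows essentially the same route as the paper's proof: split the pattern into $p$ subpatterns of length $\Theta(m/p)$, match them independently, then merge along a balanced binary tree where the $k$-th level runs $p/2^k$ merges with $2^k$ processors each via Lemma~\ref{lem:parallel_merging}, so that the level times $\Oh{\timeSA(1+\lg_{2^k}\lg n)}$ sum, by the harmonic bound, to $\Oh{\timeSA(\lg p+\lg\lg p\cdot\lg\lg n)}$. Your write-up even makes the harmonic-sum step explicit, which the paper leaves implicit in its one-line evaluation of $\sum_{k=1}^{\lg p}(1+\lg_{2^k}\lg n)$; the work and space accounting matches the paper's as well.
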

\begin{proof}
  Let $P=\alpha^0_1\alpha^0_2\dots\alpha^0_p$ be a pattern of length $m$ such that $\CARD{\alpha^0_i}=\Theta{\left(\frac{m}{p}\right)}$ for $i=1,\dots,p$.
  The computation of all intervals $\Int{\alpha^0_i}$ requires \Oh{\timeSA \frac{m}{p}} time.
  In the first merge step we have two processors to compute each of the intervals $\Int{\alpha^1_i}:=\Int{\alpha^0_{2i-1}\alpha^0_{2i}}$ for $i=1,\dots,\frac{p}{2}$.
In each merge step we halve the number of intervals. So in the $k$-th merge step ($1 \le k \le \lg p$), we have $2^k$ processors
to compute each of the intervals $\Int{\alpha^k_i}:=\Int{\alpha^{k-1}_{2i-1}\alpha^{k-1}_{2i}}$ for $i=1,\dots,\frac{p}{2^k}$.
As we require $\Oh{\lg p}$ merge steps and can use Lemma~\ref{lem:parallel_merging} with $2^k$ processors in the $k$-th merge step, 
the interval $\Int{P}$ can be computed in
$\Oh{\timeSA \sum_{k=1}^{\lg p} \left(1+\lg_{2^k}\lg n\right)} =
\Oh{\timeSA \left( \lg p + \lg\lg p \cdot \lg\lg n\right)}$ time, 
given the intervals $\Int{\alpha^0_i}$ of the subpatterns -- see Figure~\ref{fig:exact_merging}.
In total, $\Int{P}$ can be found in
$\Oh{\timeSA \left(\frac{m}{p} + \lg p + \lg\lg p\cdot\lg\lg n\right)}$ time.

During the computation of the suffix array intervals of the subpatterns of $P$ we use all $p$ processors, which results in $\Oh{\timeSA m}$ work. The same holds for each merging step, as we use all processors to parallelize the binary search. We have $\Oh{\lg p}$ merge steps. During the $k$-th merge step, we merge $\frac{p}{2^k}$ suffix array intervals with $2^k$ processors each. Using Lemma~\ref{lem:parallel_merging} the total work 
is
$%
\Oh{\timeSA\left( m+ \min\left(p,\lg n\right)\left(\lg p + \lg\lg p\cdot\lg\lg n\right)\right)}$.
\end{proof}

\begin{figure}[t!]
  \centering
  \includegraphics[scale=.8]{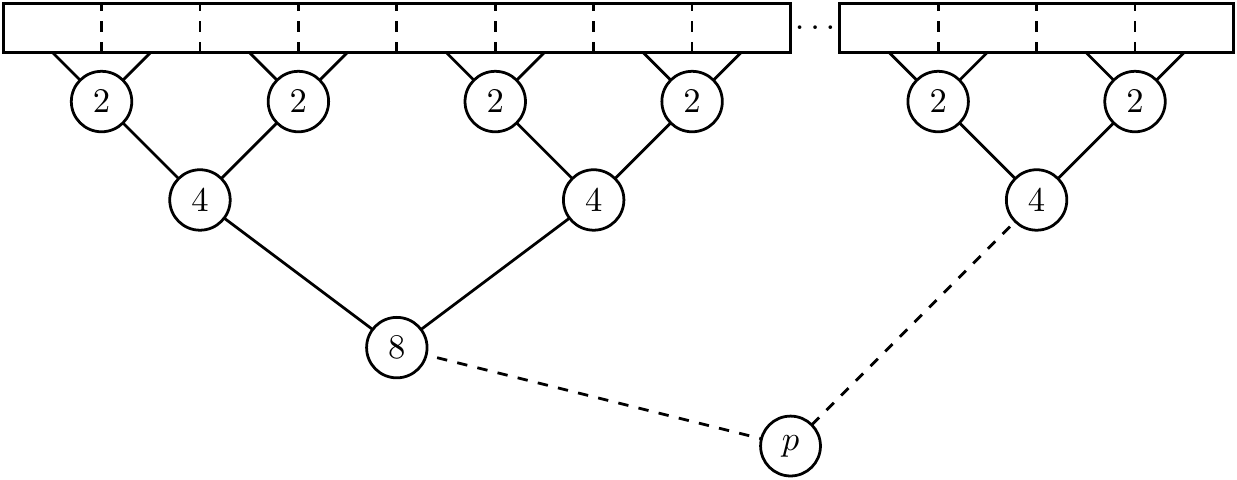}
  \caption{Schedule for the merging of $p$ suffix array intervals, i.e., the suffix array intervals of the subpatterns. The number in each node denotes the number of processors available for the merging of the two considered suffix array intervals.\label{fig:exact_merging}}
\end{figure}

\section{Parallel Approximate Pattern Matching}
\label{sec:parallel_approximate_pattern_matching}
In this section, we consider two different distances for the approximate string matching problem. The first distance we consider is the \emph{Levenshtein distance}, where the distance between two patterns $P$ and $P^\prime$ is the minimal number of the operations \emph{insert}, \emph{change} and \emph{remove} required to change $P^\prime$ into $P$. The second one is the \emph{Hamming distance}, where the distance of two pattern $P$ and $P^\prime$ of equal length is the number of mismatching positions, i.e., $\CARD{\lbrace i\colon P\lbrack i\rbrack\neq P^\prime\lbrack i\rbrack, 1\leq i\leq\abs{P}\rbrace}$. We consider two problems related to these distances.

\begin{description}
  \item[$k$-difference problem] Given a text $T$ of length $n$ and a pattern $P$ of length $m$, we want to report all occurrences $i\in\menge{1,\dots,n}$ such that $T\lbrack i..i+j\rbrack$ and $P$ have a Levenshtein distance of at most $k$ for at least one $j\in\menge{0,\dots,n-i}$. 
  \item[$k$-mismatch problem] Given a text $T$ of length $n$ and a pattern $P$ of length $m$, we want to report all occurrences $i\in\menge{1,\dots,n-m}$ such that $T\lbrack i..i+m\rbrack$ and $P$ have a Hamming distance of at most $k$.
\end{description}

We apply the results from Section~\ref{sec:suffix_interval_merging} to parallelize the approximate string matching algorithm by Huynh et al. \cite{Huynh2006}. To do so, we first present an approach to compute the suffix array intervals of all prefixes and suffixes of the pattern in parallel -- see Figure~\ref{fig:approximative_preprocessing}.

\begin{lemma}\label{lemParallelPrecompute1Approx}
  Given a text of length~$n$ and its suffix array, 
  we can compute the suffix array intervals of all prefixes and suffixes of a pattern of length~$m$ in 
  $\Oh{\timeSA (1+\frac{m}{p})\lg p\cdot\lg\lg n}$ parallel time operating on a CREW-PRAM with $p$~processors.
\end{lemma}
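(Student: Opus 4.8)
The goal is to compute $\Int{P[1..j]}$ for all prefixes $j=1,\dots,m$ and $\Int{P[j..m]}$ for all suffixes simultaneously, within the stated parallel time bound. My plan is to treat the prefixes and suffixes symmetrically and focus on the prefixes; the suffixes are handled identically by working on the reversed pattern (or equivalently on a suffix tree of the reversed text, but since we only merge intervals the same machinery applies). The key observation is that the prefix intervals are exactly the partial products $\Int{\alpha^0_1}, \Int{\alpha^0_1\alpha^0_2}, \dots, \Int{\alpha^0_1\cdots\alpha^0_p}$ when the pattern is cut into $p$ equal blocks, refined down to single characters. So I would reuse the balanced merge tree of Theorem~\ref{the:parallel_exact} but now extract \emph{all} the intermediate intervals that appear at the internal nodes, not merely the root.

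The plan is as follows. First, partition $P$ into $p$ blocks $\alpha^0_1,\dots,\alpha^0_p$ of length $\Theta(m/p)$ and compute every block interval $\Int{\alpha^0_i}$ sequentially within its block; this costs $\Oh{\timeSA\,m/p}$ per processor. Second, build the same binary merge tree over the $p$ blocks as in Figure~\ref{fig:exact_merging}, with $\Oh{\lg p}$ levels, assigning $2^k$ processors per merge at level $k$. Unlike the exact case, I now want every prefix interval $\Int{\alpha^0_1\cdots\alpha^0_i}$ for all $i$, so at each node I record its interval rather than discarding it. To assemble an arbitrary prefix interval $\Int{\alpha^0_1\cdots\alpha^0_i}$, I decompose the range $[1..i]$ into $\Oh{\lg p}$ canonical merge-tree nodes (a standard prefix decomposition along the right spine) and merge those $\Oh{\lg p}$ precomputed intervals left-to-right; each such chain of merges again uses Lemma~\ref{lem:parallel_merging}. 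Third, to recover the intervals of prefixes whose length is \emph{not} a multiple of the block size, I descend into the single block $\alpha^0_i$ containing the cut point and compute the $\Oh{m/p}$ character-level prefix intervals inside it, each by one merge onto the already-known interval of the block prefix. This is where the factor $(1+m/p)$ comes from.

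For the time bound, each of the $\Oh{\lg p}$ merges at a node with $2^k$ processors costs $\Oh{\timeSA(1+\lg_{2^k}\lg n)}$ by Lemma~\ref{lem:parallel_merging}, and summing a chain of such merges over the levels telescopes to $\Oh{\timeSA(\lg p + \lg\lg p\cdot\lg\lg n)}$, exactly as in Theorem~\ref{the:parallel_exact}; I would absorb this into the coarser bound $\Oh{\timeSA\lg p\cdot\lg\lg n}$ claimed here. The within-block refinement contributes $\Oh{m/p}$ sequential merges, each costing $\Oh{\timeSA\lg\lg n}$ by Lemma~\ref{lem:suffix_interval_merging_time}, giving the $\Oh{\timeSA\frac{m}{p}\lg\lg n}$ term. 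Combining yields $\Oh{\timeSA(1+\frac{m}{p})\lg p\cdot\lg\lg n}$. Running the mirror-image computation for suffixes in parallel only doubles the processor demand and does not change the asymptotics.

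The main obstacle I anticipate is the scheduling and processor accounting: there are $\Theta(m)$ prefix-and-suffix intervals to produce but only $p$ processors, so I must verify that the chains of merges for different prefixes can share the $\Oh{\lg p}$ precomputed node intervals and be pipelined across levels without exceeding the processor budget, so that the per-level parallel time — not the total work — governs the bound. Concretely, I would argue that all prefixes can be resolved in a single bottom-up sweep of the merge tree (each node's interval, once computed, feeds every prefix whose canonical decomposition uses it), so the $\lg p$-factor in the time comes from the tree depth rather than from processing prefixes one at a time; making this sharing argument rigorous, rather than naively paying $\Oh{\lg p}$ per prefix, is the delicate part.
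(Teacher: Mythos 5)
Your proof is correct, but it follows a genuinely different merge schedule from the paper's. The paper maintains, after the $k$-th merge step, the intervals of \emph{all} prefixes of each aligned superblock of $2^k$ blocks: every left-half interval is merged with each of the $2^{k-1}\frac{m}{p}$ prefix intervals of its right half, so each non-block-aligned prefix gets re-merged at every level. This costs $\Oh{m}$ merges per level and $\Theta(m\lg p)$ merges overall, each done sequentially via Lemma~\ref{lem:suffix_interval_merging_time}, which with $p$ processors gives exactly the stated $\Oh{\timeSA(1+\frac{m}{p})\lg p\cdot\lg\lg n}$ bound. You instead run the plain merge tree of Theorem~\ref{the:parallel_exact} while recording the internal-node intervals ($\Oh{p}$ merges), recover the $p$ block-aligned prefix intervals by canonical prefix decompositions into $\Oh{\lg p}$ tree nodes ($\Oh{p\lg p}$ merges), and finish every non-aligned prefix with a \emph{single} merge of a block-aligned prefix interval with a within-block prefix interval ($\Oh{m}$ merges). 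That is $\Oh{p\lg p + m}$ merges instead of $\Theta(m\lg p)$, and your resulting time, $\Oh{\timeSA\left(\lg p + \frac{m}{p}\right)\lg\lg n}$, is actually slightly \emph{better} than the lemma's bound -- the paper pays the $\lg p$ factor on the $\frac{m}{p}$ term and you do not -- which would propagate to a small improvement in the preprocessing terms of Theorems~\ref{the:1_error_approximate_string_matching} and~\ref{the:k_error_approximate_string_matching}.

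Three small points. First, the scheduling concern you flag at the end is not actually delicate: assign one decomposition chain per processor for the $p$ block-aligned prefixes and one block per processor in the refinement phase; the chains only \emph{read} the precomputed node intervals, and concurrent reading is allowed on a CREW PRAM, so no pipelining or sharing argument is needed. Second, drop the suggestion of handling suffixes via the reversed pattern or a suffix tree of the reversed text: intervals computed in an index of the reversed text live in a different suffix array and could never be merged with the prefix intervals, which is precisely what the approximate-matching theorems require; the correct statement is the one you fall back on (and the paper's ``works analogously''), namely that the suffix intervals of $P$ are computed in the same suffix array by the mirror-image schedule. Third, make explicit that your first phase stores all within-block prefix (and suffix) intervals, not just the $p$ block intervals, since these are the right-hand operands of your final single merges; this is what the paper's first step does, and it comes for free from the character-by-character computation of each block's interval.
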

\begin{proof}
  Let $P=\alpha_0\alpha_1\dots\alpha_{p-1}$ be a pattern of length $m$ such that $\CARD{\alpha_i}=\frac{m}{p}$ for $i=0,\dots,p-1$. Thus, the $j$-th prefix of a subpattern $\alpha_i$ is $P[1+i\frac{m}{p}..i\frac{m}{p}+j]$ for all $i=0,\dots,p-1$ and $j=1,\dots,\frac{m}{p}$. First, we compute the suffix array intervals for all those prefixes
  in parallel, which requires $\Oh{1+m/p}$ time, as no merging is necessary during this step.

  In the second step, we merge the suffix array intervals in parallel. Since we want the suffix array intervals of all prefixes of the pattern, during the first merge step we merge the suffix array intervals $\Int{P\lbrack1+2i\frac{m}{p}..(2i+1)\frac{m}{p}\rbrack}$ as the left side interval with each of the intervals $\Int{P\lbrack1+(2i+1)\frac{m}{p}..(2i+1)\frac{m}{p}+j}$ as right side interval for all $i=0,\dots,\frac{p}{2}-1$ and $j=1,\dots,\frac{m}{p}$. This results in the intervals $\Int{P\lbrack1+2i\frac{m}{p}..(2i+1)\frac{m}{p}+j}$ for $i=0,\dots,\frac{p}{2}-1$ and $j=1,\dots,\frac{m}{p}$. During each merge step, we halve the number of left side intervals that we have to consider during the next merge step but double the number of right side intervals that are merged, i.e., in the $k$-th merge step, we merge the intervals $\Int{P\lbrack1+2^k i\frac{m}{p}..(2^k i+2^{k-1})\frac{m}{p}\rbrack}$ with each of the intervals $\Int{P\lbrack1+(2^k i+2^{k-1})\frac{m}{p}..(2^k i+2^{k-1})\frac{m}{p}+j\rbrack}$ for $i=0,\dots,\frac{p}{2^k}-1$ and $j=1,\dots,2^{k-1}\frac{m}{p}$. 
  This amounts to $\Oh{m}$ intervals that need to be merged in each step. 
  In the end, we obtain the suffix array intervals of the prefixes of $P$, i.e., the intervals $\Int{P\I{1..j}}$ for $j=1,\dots,m$.
  Since we start with $p$ left side intervals, and each merge step halves the number of left side intervals, we end up with $\lg p$ merge steps.

The computation of the suffix array intervals of all suffixes of $P$ works analogously. Using Lemma \ref{lem:suffix_interval_merging_time}, we can compute the suffix array intervals of all prefixes and suffixes of the pattern in $\Oh{\timeSA (1+\frac{m}{p})\lg p\cdot\lg\lg n}$ time.
\end{proof}
\begin{figure}[t!]
  \centering
  \includegraphics{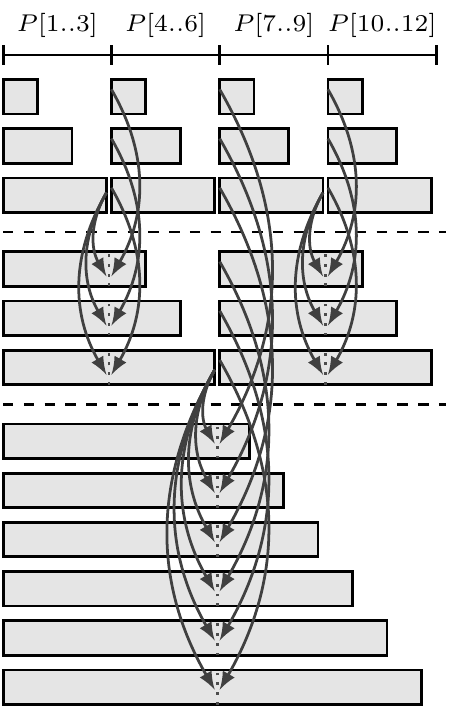}
  \caption{Schematics of the merging process to compute the suffix array intervals of all prefixes of a pattern $P=a_1\dots a_4$ of length $m=12$ using $p=4$ processors. The gray blocks above the first dashed line represent the suffix array intervals of all prefixes of the subpatterns $a_i$ (for $i=1,\dots,4$). The blocks between the dashed lines represent the suffix array intervals after the first merge step. The intervals that are merged are shown by arrows. The blocks below the second dashed line are the suffix array intervals computed in the second merge step.\label{fig:approximative_preprocessing}}
\end{figure}

\begin{figure}[t]
  \centerline{
    \begin{tabular}{llll}
      operation    & $c$                                   & $P'$                  & intervals to merge
      \\\toprule
      substitution & $c \in \Sigma \setminus \menge{P[i]}$ & $P[1..i-1]cP[i+1..m]$ & {\Int{{\child[v,c]}}} and {\Int{P\I{i+1..m}}}
      \\\midrule
      deletion     & $-$                                   & $P[1..i-1]P[i+1..m]$  & {\Int{v}} and {\Int{P\I{i+1..m}}}
      \\\midrule
      insertion    & $c \in \Sigma$                        & $P[1..i-1]cP[i..m]$   & {\Int{{\child[v,c]}}} and {\Int{P\I{i..m}}}
      \\\bottomrule
    \end{tabular}
  }
  \caption{Let $P'$ be the resulting string of introducing an error in the pattern~$P[1..m]$ at position~$i$. 
    Further, let $v$ be the suffix tree node with $\Int{v} = {\Int{P\I{1..i-1}}}$. We can compute the two suffix array intervals considered for merging in \Oh{\timeSA \lg \lg n} time, and perform the merging in the same time.
  }
  \label{fig1Approx}
\end{figure}

\begin{theorem}
  \label{the:1_error_approximate_string_matching}
  With $\spaceSA + \Oh{n}$ bits of space, 
  the $1$-difference and $1$-mismatch problems can be solved in parallel in 
  $\Oh{\timeSA \lg \lg n \left( \frac{m\sigma}{p} + (1+\frac{m}{p}) \lg p\right) + \occ}$ for $p \le m\sigma$.
\end{theorem}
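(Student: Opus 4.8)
The plan is to combine the parallel preprocessing of Lemma~\ref{lemParallelPrecompute1Approx} with a parallel enumeration over all single-error modifications of the pattern, using the merge primitive of Lemma~\ref{lem:parallel_merging} to evaluate each modification. The table in Figure~\ref{fig1Approx} is the key structural observation: every string $P'$ within Levenshtein (or Hamming) distance~$1$ of $P$ arises by choosing an error position~$i$ and an operation (substitution, deletion, or insertion), and the suffix array interval $\Int{P'}$ is obtained by a \emph{single} merge of a left-side interval derived from $\Int{P\I{1..i-1}}$ with a right-side interval that is the precomputed interval of a suffix $\Int{P\I{i+1..m}}$ or $\Int{P\I{i..m}}$. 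Thus, once all prefix and suffix intervals are available, each of the $\Oh{m\sigma}$ candidate modifications costs only one \child-query plus one merge.

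First I would run the preprocessing of Lemma~\ref{lemParallelPrecompute1Approx} to obtain the suffix array intervals of all prefixes $\Int{P\I{1..j}}$ and all suffixes $\Int{P\I{j..m}}$; this takes $\Oh{\timeSA (1+\frac{m}{p})\lg p\cdot\lg\lg n}$ time and accounts for the second summand in the claimed bound. Next I would enumerate the $\Oh{m\sigma}$ modifications: for each error position $i\in\menge{1,\dots,m}$, substitution contributes $\sigma-1$ candidates, deletion one, and insertion $\sigma$, so there are $\Theta(m\sigma)$ candidate patterns $P'$ in total. For each candidate I would locate the node $v$ with $\Int{v}=\Int{P\I{1..i-1}}$, compute the required child $\child[v,c]$ via Lemma~\ref{lemChildChar} (in $\Oh{\timeSA\lg\lg n}$ time since $\Delta=\lg^c n$), and then perform the single merge prescribed by Figure~\ref{fig1Approx} using Lemma~\ref{lem:suffix_interval_merging_time} in $\Oh{\timeSA\lg\lg n}$ time.

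The scheduling is where the first summand $\frac{m\sigma}{p}$ comes from. With $p\le m\sigma$ processors, I would distribute the $\Theta(m\sigma)$ independent candidate evaluations evenly, so each processor handles $\Oh{\frac{m\sigma}{p}}$ candidates, each costing $\Oh{\timeSA\lg\lg n}$ for its child-query and merge. This gives $\Oh{\timeSA\lg\lg n\cdot\frac{m\sigma}{p}}$ time for the matching phase. Adding the preprocessing term and the final reporting term~$\occ$ (listing the occurrences from all the resulting intervals, after deduplication) yields the stated bound $\Oh{\timeSA\lg\lg n\left(\frac{m\sigma}{p}+(1+\frac{m}{p})\lg p\right)+\occ}$. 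The space bound $\spaceSA+\Oh{n}$ is inherited directly from Lemma~\ref{lem:suffix_interval_merging_time}, whose auxiliary structures occupy $\Oh{n}$ bits on top of the compressed suffix array.

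The main obstacle I expect is \emph{correctness of deduplication and the reporting term}, rather than the timing analysis. Distinct modifications can yield overlapping or identical intervals (e.g.\ a substitution and a deletion-plus-insertion may both produce the same $P'$, or two intervals may share occurrences), so reporting $\occ$ \emph{distinct} occurrences without blowing up the additive term requires care; I would argue that the total size of all candidate intervals is $\Oh{\occ}$ up to a constant number of repetitions per true occurrence, since each occurrence is witnessed by only $\Oh{1}$ edit patterns at a fixed position, so a parallel sort-and-unique (or a direct marking scheme) keeps the reporting cost at $\Oh{\occ}$. A secondary subtlety is that the \child-query returns a node only when the sought child exists; when $\child[v,c]$ is undefined the corresponding candidate contributes an empty interval and is simply discarded, which must be handled uniformly across processors so that the even load balancing is preserved.
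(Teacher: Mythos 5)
Your proposal matches the paper's proof essentially step for step: the same preprocessing via Lemma~\ref{lemParallelPrecompute1Approx}, the same decomposition from Figure~\ref{fig1Approx} of each candidate $P'$ into one child-query (Lemma~\ref{lemChildChar}) plus one merge (Lemma~\ref{lem:suffix_interval_merging_time}), and the same embarrassingly parallel distribution of the $\Theta(m\sigma)$ candidates over $p \le m\sigma$ processors. The only divergence is the handling of duplicate reports, where the paper does not resolve this inside the theorem's proof but defers it to the minimality check of Huynh et al.\ (Lemma~\ref{lem:reporting_occurrences_only_once}, discussed immediately after the theorem), whereas you sketch a sort-and-unique argument; either route closes that gap.
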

\begin{proof}
  We precompute the suffix array intervals $\Int{P\I{i..m}}$ and $\Int{P\I{1..i}}$ for all $1 \le i \le m$ in parallel by Lemma~\ref{lemParallelPrecompute1Approx}.
  This requires \Oh{\timeSA(1+\frac{m}{p}) \lg p\cdot\lg\lg n} time.
  The exact matches are found in the interval~$\Int{P\I{1..m}}$.
  To compute the matches with one error, we iterate over all positions in~$P[1..m]$, and introduce an error at one position $i$ with $1\leq i\leq m$.
  An error can be introduced by an insertion, a deletion, or a substitution.
  Let us fix one modification occurring at position~$i$, and call the modified string~$P'$.
  Our task is to find $\Int{P'}$.
  To this end, we exploit some already computed results, i.e., we have
  $\Int{P'\I{1..i-1}} = \Int{P\I{1..i-1}}$, and either
  (substitution) $\Int{P'\I{i+1..m}} = \Int{P\I{i+1..m}}$, 
  (deletion) $\Int{P'\I{i..m-1}} = \Int{P\I{i+1..m}}$, 
  or (insertion) $\Int{P'\I{i+1..m+1}} = \Int{P\I{i..m}}$ -- see Figure~\ref{fig1Approx}.
    If $P'$ resulted from an insertion or substitution, the interval~\Int{P'\I{1..i-1}} can be enhanced to $\Int{P'\I{1..i}}$ by \child[v,P'\I{i}] in \Oh{\timeSA \lg\lg n} time due to Lemma~\ref{lemChildChar}, where $v$ is the node with $\Int{v} = \Int{P'\I{1..i-1}}$.
    Finally, we can compute \Int{P'} by merging two intervals in \Oh{\timeSA \lg \lg n} time with Lemma \ref{lem:suffix_interval_merging_time}.
Introducing an error in $P$ at $m$ different positions with $\sigma$ different characters is embarrassingly parallel.
With $p\leq m\sigma$~processors it requires \Oh{\timeSA \frac{m \sigma}{p} \lg \lg n + \occ} time in addition to the time for the preprocessing.
\end{proof}

Up to now, we have assumed that the time for the output is in $\Oh{\occ}$. Unfortunately, this is not always the case, as an occurrence of a pattern with $k$ errors may be reported multiple times. For example, if we allow one error, the pattern \texttt{aba} could be reported twice at the first position of the text \texttt{aaa}, as the second position of the pattern could either be deleted or replaced. Hence, we need to make sure that each occurrence of a pattern is reported just once, regardless how many different combinations of operations can be used to change the pattern to the corresponding substring. This problem has been discussed and solved in~\cite{Huynh2006}.
\begin{lemma}[{\cite[Discussion related to Theorem 2]{Huynh2006}}]
  \label{lem:reporting_occurrences_only_once}
  Given a pattern $P$, we can check whether an occurrence of the pattern $P^\prime$ with at most $k$ errors is minimal regarding its distance and its edit operations to $P$ in $\Oh{k}$ time whenever we append a character or want to report an occurrence.
\end{lemma}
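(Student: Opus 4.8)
The plan is to make every reported occurrence correspond to a single \emph{canonical} edit script, and to recognise that script incrementally while the match is being extended. Recall that in the enumeration of Huynh et al.\ the matched substring $P'$ is obtained from $P$ by a sequence of edit operations, and that two different scripts---even scripts producing strings $P'$ of different lengths, as in the example of \texttt{aba} over the text \texttt{aaa} above---may yield matches sharing the same starting text position, i.e., the same occurrence. I would therefore fix, once and for all, a unique representative among the scripts that realise any given occurrence, and report only that one.

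First I would set up the banded edit-distance dynamic program between the growing matched substring and the pattern~$P$. Writing $D\I{i}\I{j}$ for the edit distance between the length-$i$ prefix of the matched substring and the prefix $P\I{1..j}$, an alignment is a traceback path in this matrix whose edges are the edit operations (diagonal for a match or substitution, horizontal for a deletion, vertical for an insertion). Because we tolerate at most $k$ errors, only the $\Oh{k}$ cells of the current column lying in the diagonal band $\abs{i-j}\le k$ are relevant, and appending one character advances the program by one column at a cost of $\Oh{k}$ time via the standard banded recurrence.

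To single out a canonical script I would combine a global and a local condition. Globally, (i)~I report an occurrence only when the number of errors it uses equals~$D$, so that no cheaper script realises the same match; this is minimality in the distance. Locally, (ii)~I forbid a fixed finite set of patterns in the edit script---for instance a deletion adjacent to an insertion (which could be collapsed into a substitution or elided) and adjacent indels occurring in a non-canonical order---so that no script in normal form can be locally rewritten into an equivalent script of smaller or equal distance that is earlier in the chosen order; this is minimality in the edit operations. Together (i)~and~(ii) leave exactly one script per occurrence, so reporting only scripts in normal form emits each occurrence once. Both conditions are cheap to test: (ii)~inspects only the last $\Oh{1}$ operations whenever a character is appended, while (i)~is read off the $\Oh{k}$ band. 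Thus appending a character costs $\Oh{k}$ (dominated by the band update), and a final $\Oh{k}$ pass when reporting confirms that the completed script has distance equal to~$D$ and contains no forbidden pattern.

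The main obstacle is to prove that this normal form really selects a \emph{unique} representative for every occurrence---existence and uniqueness simultaneously. Concretely, one must show that the local rewrite rules underlying~(ii), together with the minimal-distance requirement~(i), form a terminating and confluent system whose normal forms are in bijection with the occurrences, so that each occurrence is reported exactly once and none is lost. Verifying confluence---that two distinct redundant scripts for the same occurrence can always be rewritten to a common normal form, and that this can be certified from $\Oh{k}$ local state rather than by re-deriving the whole alignment---is the delicate combinatorial step; it is exactly the normal-form analysis carried out in the discussion around Theorem~2 of Huynh et al.~\cite{Huynh2006}, on which the lemma rests.
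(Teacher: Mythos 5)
Note first that the paper does not actually prove this statement: Lemma~\ref{lem:reporting_occurrences_only_once} is imported as a black box from the discussion around Theorem~2 of Huynh et al.~\cite{Huynh2006}, and the paper's ``proof'' is that citation. So the question is whether your proposal could stand as a self-contained substitute for the citation --- and it cannot, by your own admission. Your framework is reasonable: a banded edit-distance table of width $\Oh{k}$ gives the $\Oh{k}$ cost per appended character, and restricting attention to edit scripts in a canonical normal form (minimal distance plus a forbidden-pattern condition on the operations) is the right shape of argument. But the entire content of the lemma is that such a normal form exists, is unique per occurrence, and can be certified from $\Oh{k}$ local state; you correctly identify this as ``the delicate combinatorial step'' and then defer it to exactly the discussion in \cite{Huynh2006} that the lemma cites. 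As a proof this is circular: the lemma has been reduced to itself, with the confluence and bijectivity of the rewriting system asserted rather than established.

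The gap is not merely formal, and the paper's own example shows why. With $P=\texttt{aba}$ and text \texttt{aaa}, the occurrence at position~1 is witnessed both by the matched string \texttt{aa} (delete $P[2]$) and by the matched string \texttt{aaa} (substitute $P[2]$ by \texttt{a}): two witnesses of \emph{different lengths}, each realized by a script that is already distance-minimal for its own witness. Your condition~(i) is evaluated per witness and so excludes neither, and your condition~(ii) rewrites operations within the script of a fixed witness (e.g.\ collapsing an adjacent deletion--insertion pair), so it also excludes neither --- each of the two scripts here consists of a single operation with nothing adjacent to rewrite. Eliminating this duplicate requires a rule that compares alignments \emph{across} witnesses sharing a starting position, i.e.\ a text-dependent condition such as forbidding a deletion of $P[i]$ when the aligned text character would equally support a substitution consuming one more text symbol; formulating that cross-witness rule set and proving it selects exactly one representative per occurrence is precisely what the cited discussion does and what your proposal leaves open.
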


Using Lemmas~\ref{lemParallelPrecompute1Approx}~and~\ref{lem:reporting_occurrences_only_once}, we can solve the $1$-difference and $1$-mismatch problems in parallel as described above. The same is true for the $k$-difference and $k$-mismatch problems.

\begin{theorem}
  \label{the:k_error_approximate_string_matching}
  Using $\spaceSA + \Oh{n}$ bits of space, the $k$-difference and $k$-mismatch problems can be solved in parallel in 
  $\Oh{\timeSA \left(\frac{m^k \sigma^k}{p} \max\left(k,\lg\lg n\right)+ (1+\frac{m}{p})\lg p \cdot \lg \lg n\right) +\occ}$ for $p \le m^k \sigma^k$ processors.
\end{theorem}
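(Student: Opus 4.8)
The plan is to generalize Theorem~\ref{the:1_error_approximate_string_matching} from one error to $k$ errors by recursively introducing modifications into the pattern. The structure mirrors the $1$-error case: first precompute the suffix array intervals of all prefixes and suffixes of $P$ in parallel via Lemma~\ref{lemParallelPrecompute1Approx}, which costs $\Oh{\timeSA(1+\frac{m}{p})\lg p\cdot\lg\lg n}$ time. I would then enumerate all modified patterns $P'$ obtainable from $P$ by at most $k$ edit operations. Since each of the $k$ errors can occur at one of $\Oh{m}$ positions and (for substitution/insertion) involves one of $\sigma$ characters, there are $\Oh{m^k\sigma^k}$ such modified patterns, and computing their intervals is embarrassingly parallel across the $p\le m^k\sigma^k$ processors.

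The key step is bounding the cost of computing $\Int{P'}$ for a single modified pattern $P'$. I would build $P'$ incrementally: starting from a precomputed prefix interval, I repeatedly either append a character with \child (costing $\Oh{\timeSA\lg\lg n}$ by Lemma~\ref{lemChildChar}) or merge in a precomputed suffix interval (costing $\Oh{\timeSA\lg\lg n}$ by Lemma~\ref{lem:suffix_interval_merging_time}). With $k$ errors there are $\Oh{k}$ such splice points, so each $P'$ requires $\Oh{\timeSA\, k\lg\lg n}$ work for the interval computations alone. However, Lemma~\ref{lem:reporting_occurrences_only_once} adds an $\Oh{k}$ overhead every time we append a character or report, to guarantee each occurrence is reported exactly once; this contributes $\Oh{k^2}$ per modified pattern in the worst case, but amortized against the $\Oh{k}$ append operations it is the per-append $\Oh{k}$ factor that matters. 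Combining the per-operation $\Oh{k}$ check with the per-operation $\Oh{\timeSA\lg\lg n}$ interval cost yields a per-pattern bound of $\Oh{\timeSA\max(k,\lg\lg n)}$ per splice point, hence the $\max(k,\lg\lg n)$ factor in the statement.

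Distributing the $\Oh{m^k\sigma^k}$ modified patterns evenly over $p\le m^k\sigma^k$ processors gives $\Oh{\frac{m^k\sigma^k}{p}}$ patterns per processor, so the matching phase runs in $\Oh{\timeSA\frac{m^k\sigma^k}{p}\max(k,\lg\lg n)}$ time. Adding the preprocessing time of $\Oh{\timeSA(1+\frac{m}{p})\lg p\cdot\lg\lg n}$ and the output term $\occ$ yields the claimed bound. The space bound $\spaceSA+\Oh{n}$ bits carries over unchanged, since we reuse the same merging data structure from Lemma~\ref{lem:suffix_interval_merging_time} and the precomputed intervals from the preprocessing phase, neither of which grows with $k$.

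**The main obstacle** I anticipate is the duplicate-elimination bookkeeping: naively one would merge $k$ precomputed suffix intervals directly, but a single occurrence in the text can arise from many distinct sequences of edit operations (as the \texttt{aba}/\texttt{aaa} example illustrates), so the $\Oh{k}$-time minimality check of Lemma~\ref{lem:reporting_occurrences_only_once} must be threaded through every append and every report to keep the output linear in $\occ$. Verifying that this check interacts correctly with the parallel distribution of modified patterns -- in particular, that no occurrence is simultaneously accepted by two processors handling different edit sequences -- is the delicate point, and is exactly where the reference to Huynh et al.~\cite{Huynh2006} does the heavy lifting.
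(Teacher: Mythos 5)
Your overall structure matches the paper's proof: the same preprocessing via Lemma~\ref{lemParallelPrecompute1Approx}, the bound of $\Oh{m^k\sigma^k}$ on the number of modified patterns (the paper cites Ukkonen's Theorem~6 for this), embarrassingly parallel distribution over $p \le m^k\sigma^k$ processors, and the minimality check of Lemma~\ref{lem:reporting_occurrences_only_once} to keep the output term at $\Oh{\occ}$. However, your cost accounting has a genuine gap. You charge each modified pattern $P'$ with $\Oh{k}$ splice points, each costing $\Oh{\timeSA\max(k,\lg\lg n)}$, i.e., $\Oh{\timeSA\, k \max(k,\lg\lg n)}$ per pattern; distributing $\Oh{\frac{m^k\sigma^k}{p}}$ such patterns to each processor then yields $\Oh{\timeSA \frac{m^k\sigma^k}{p}\, k \max(k,\lg\lg n)}$, not the claimed $\Oh{\timeSA \frac{m^k\sigma^k}{p}\max(k,\lg\lg n)}$ --- your final step silently drops a factor of $k$. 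Your remark about amortizing the $\Oh{k^2}$ check cost against the $\Oh{k}$ appends does not remove this factor; it only restates the per-append cost, and the per-pattern cost remains $\Oh{k}$ operations when every $P'$ is built from scratch.

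The missing idea is that the modified patterns must not be processed independently from scratch, but enumerated incrementally, as in Huynh et al.\ and as the paper implicitly assumes (its phrase ``every time we update the considered pattern'' refers to single-error updates): organize all modifications in a tree in which each node is obtained from its parent by introducing one additional error, costing one \child{} step, one merge, and one $\Oh{k}$ check. Patterns sharing their first $j$ errors then share the computation for those errors, so the total number of update operations is the number of tree nodes, $\sum_{j=0}^{k}\Oh{(m\sigma)^j} = \Oh{m^k\sigma^k}$, each costing $\Oh{\timeSA\max(k,\lg\lg n)}$. Distributing these nodes (e.g., level by level) over the $p$ processors gives the claimed time. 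With this change your argument aligns with the paper's; in fairness, the paper is itself terse about how the tree traversal is load-balanced across processors, so spelling out the incremental enumeration and its parallel scheduling would strengthen the write-up rather than merely reproduce it.
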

\begin{proof}
  The idea of the algorithm is similar to the algorithm of Theorem~\ref{the:1_error_approximate_string_matching}. First, we compute the suffix array intervals of all the suffixes and prefixes of the pattern using Lemma~\ref{lemParallelPrecompute1Approx}.
  This requires \Oh{\timeSA(1+\frac{m}{p}) \lg p\cdot\lg\lg n} time.
  We want to introduce at most $k$ errors in parallel. Again, we parallelize over the positions of the introduced errors. Similar to the idea of Theorem~\ref{the:1_error_approximate_string_matching}, we merge different suffix array intervals. But in this case, we cannot parallelize over one position, instead we have to parallelize considering up to $k$ positions where we can include an error.

  The number of patterns $P^\prime$ that have a distance of at most $k$ from $P$ is bounded by $\Oh{\sigma^k m^k}$ \cite[Theorem 6]{Ukkonen1993}. Thus, we require $\Oh{\timeSA \frac{m^k \sigma^k}{p}\max\left(k,\lg\lg n\right)+\occ}$ time using $p \le \sigma^k m^k$ processors in parallel. The $\Oh{\max\left(k,\lg\lg n\right)}$-term results from the check of whether the occurrence is computed with minimal distance to the pattern $P$, which has to be done every time we update the considered pattern and requires $\Oh{k}$ time using Lemma~\ref{lem:reporting_occurrences_only_once}.
\end{proof}
\subsubsection*{Acknowledgment} We thank Anders Roy Christiansen, who found an error in a previous version of this paper.

\bibliographystyle{plain}
\bibliography{lit}

\end{document}